\newcommand*\figpath{.}
\def\Bmp#1{ \begin{minipage}{#1} }
\def\Bmpc#1{ \begin{minipage}[c]{#1} }
\def\Bmpt#1{ \begin{minipage}[t]{#1} }
\def\Bmpb#1{ \begin{minipage}[b]{#1} }
\def\Emp{ \end{minipage} }
\newtheorem{theorem}{\bf Theorem}[section]
\newcommand{\gap}{\mbox{\hspace{1em}}}
\newcommand{\dee}[1]{\,{\rm d}{#1}}
\newcommand{\ord}{\mathcal{O}}
\newcommand{\J}{\mathcal{J}}
\def\RR{{\mathbb{R}}}
\newcommand{\argmin}{\operatorname{argmin}}
\newcommand{\supp}{\operatorname{supp}}
\newcommand{\Rmnum}[1]{\expandafter\@slowromancap\romannumeral #1@}
\title{On the convergence of data assimilation for the one-dimensional
  shallow water equations with sparse observations}
\author{N.~K.-R.~Kevlahan\footnote{Corresponding author. Email
      address: kevlahan@mcmaster.ca}, R.~Khan, B.~Protas}
  \affil{Department of Mathematics and Statistics, McMaster University
    \\ Hamilton, Ontario, L8S 4K1, Canada} \date{}
\begin{document}
\maketitle
%\newcommand{\slugmaster}{%
%\slugger{siads}{xxxx}{xx}{x}{x--x}}%slugger should be set to juq, siads, sifin, or siims

%----abstract-------------------------------------------------------------------------------

\pagestyle{myheadings}
\thispagestyle{plain}
\begin{abstract}
  The shallow water equations (SWE) are a widely used model for the
  propagation of surface waves on the oceans.  We consider the problem
  of optimally determining the initial conditions for the
  one-dimensional SWE in an unbounded domain from a small set of
  observations of the sea surface height.  In the linear case we prove
  a theorem that gives sufficient conditions for convergence to the
  true initial conditions. At least two observation points must be
  used and at least one pair of observation points must be spaced more
  closely than half the effective minimum wavelength of the energy
  spectrum of the initial conditions. {This result also applies to the
    linear wave equation.}  Our analysis is confirmed by numerical
  experiments for both the linear and nonlinear SWE data assimilation
  problems.  These results show that convergence rates improve with
  increasing numbers of observation points and that at least three
  observation points are required for the practically useful results.
  Better results are obtained for the nonlinear equations provided
  more than two observation points are used. This paper is a first
  step in understanding the conditions for observability of the SWE
  for small numbers of observation points in more physically realistic
  settings.
\end{abstract}
{\bf{Keywords:}} 
Shallow water equations; Data assimilation; Tsunami
%\end{keywords}

%\begin{AMS}
{\bf{AMS subject classifications:}}
35L05, 35Q35, 35Q93, 65M06

\section{Introduction\label{sec:intro}}
This paper considers computational aspects of a variational data
assimilation technique that could be applied to improve forecasting of
tsunami waves. More specifically, we focus on how to choose the number
and locations of wave height sensors so that iterative gradient-based
approaches typically employed to solve data assimilation problem
converge to correct solutions. We derive an observability-type
criterion based on a linearized partial differential equation model
and show how it can be used to characterize the accuracy of
reconstructions in both the linear and nonlinear setting.  Although we
consider the idealized one-dimensional case, the Nyquist-like
observability condition we derive for the spacing of the observation
points should carry over to the more realistic two-dimensional case.
	
Current data assimilation methods for tsunamis usually aim to
reconstruct the initial sea surface height perturbation from seismic
data.  This requires using seismic data to find an optimal
representation for the motion of the seafloor and then using a
separate technique, such as the Okada model~\cite{Okada:1985}, to
convert this seismic data into an initial condition for the sea
surface displacement and velocity.  These seismic models are often
supplemented with real-time sea surface data from tsunameters to make
them more robust.  For example, the MOST model~\cite{Titov/etal:2005}
uses an assimilation method developed at the Pacific Marine
Environmental Laboratory (PMEL) that includes both seismic data and
direct observations of the sea surface perturbations.  The PMEL model
does not use variational assimilation, but instead finds the best
linear combination from a database of 246 pre-computed unit source
propagation solutions given the seismic data and tsunameter
observations.  This provides estimates of the tsunami wave
field in deep water than can be used to initialize inundation models
for coastal areas.

Maeda et al.~\cite{Maeda/etal:2015} propose an alternative
approach that assimilates data from a network of sea surface height
buoys into a 2D linear SWE model to estimate the full
tsunami wave field at a given point in time. Although this
approach uses sea surface observations, it is in fact a 3D-VAR
method in the sense that it only considers spatial dependence of the
data. Thus, since there is no explicit time dependence, this
approach does not really reconstruct the initial conditions.

We use the linear and nonlinear shallow water equations (SWE) as the
mathematical model for wave propagation, although our approach could
be extended easily to other one- or two-dimensional models (e.g.\
Boussinesq). The SWE are a coupled system of equations for
non-dispersive travelling waves where the wavelength $\lambda$ is much
larger than the ocean depth $h$, allowing averaging over the depth.

Previous work on variational data assimilation schemes for the SWE
include Stefanescu et al.~\cite{Stefanescu/etal:2015}, who compare
three reduced-order approaches to initial condition estimation,
projecting the dynamical system onto subspaces consisting of basis
elements representing the characteristics of the expected solution and
subsequently using these low-order models as surrogates for the
original system in the optimization process. While this is a novel
approach for the given problem, their fundamental objective is
improving computational efficiency. In contrast, our focus is on the
structure of the observation operator and its effect on the
convergence of the gradient-based solution of the underlying
optimization problem.  Our aim is to find the minimum observational
information required to correctly reconstruct the initial conditions,
whereas they do not address the choice of observation operator in
their analysis.

Similarly, Lacasta et al.~\cite{Lacasta/etal:2017} also focus on
computational issues, addressing the memory requirements of solving an
adjoint system where information about the physical system is stored
at all times. They compare a brute-force Monte Carlo method with
gradient-based optimization utilizing the adjoint equations. Results
are gauged by the number of simulations required to converge to the
desired state.  Analysis of optimal locations and/or the numbers of
sensors to record water depth is not performed; indeed, they 
use a single observation point.
    
In summary, any overlap between this work and the afore-mentioned studies
is limited to the derivations of the adjoint system and
the gradient formulation, which are well known. We emphasize that the
observability-type criterion derived in this study presents a
significant addition to the previous works which aimed to quantify and  improve computational efficiency.

For simplicity,  we  focus on the one-dimensional (1D) problem
defined on the real line, so that the space variable $x \in \RR$.  The
prognostic variables are height $h(x, t) = H + \eta(x,t)$, where $H$
is the average depth and $\eta(x,t)$ is the perturbation of free
surface, and velocity $u(x,t)$.   We 
consider the shallow water system
\begin{subequations}
\label{eq:SWEnl}
\begin{align}
\eta_t + (h u)_x = 0, \label{eq:SWEnla} \\
 u_ t +(\tfrac{1}{2} u^2 + \eta)_x = 0, \label{eq:SWEnlb} \\
 \eta(x,0) = \phi(x) &, \gap u(x, 0) = 0, \label{eq:SWEnlc} 
\end{align}
\end{subequations}
and its linearized version 
\begin{subequations}
\label{eq:SWEl}
\begin{align} 
\eta_t + u_x = \ 0 & , \label{eq:SWEla} \\
u_t + \eta_x = \ 0 &,  \label{eq:SWE1b} \\
\eta(x,0) = \phi(x) &, \gap u(x, 0) = 0 \label{eq:SWElc}
\end{align}
\end{subequations}
corresponding to the assumption $|\eta| \ll H$ which is a good
approximation in the case of the deep ocean.  The linear equation is
normalized so that that the mean depth $H=1$ and the wave speed $c=1$.
In both cases we assume that the support of the initial condition
$\phi$ is $\ord(1)$ and that both velocity and the perturbed surface
height vanish at infinity resulting in the boundary conditions
\begin{equation}
\eta(x,t), \ u(x,t) \rightarrow 0 \quad  \text{as} \quad x \rightarrow \pm\infty, \ t>0.
\label{eq:BC}
\end{equation}
We also assume constant depth (no bathymetry). 

In this paper we are interested in the following fundamental question:
what is the minimum information, in the form of observations of the
surface height perturbation $\eta$, required to correctly reconstruct
the {true} initial condition $\phi^{(t)}$ using a variational data
assimilation approach?

We assume that time-resolved observations {$y^{(o)}_j(t)$,}
$j=1,\dots,N_{\text{obs}}$, of the system evolution described by
\eqref{eq:SWEnl} or \eqref{eq:SWEl} are available at $N_{\text{obs}}
\ge 1$ distinct fixed observation points $\{ x_j
\}_{j=1}^{N_{\text{obs}}}$.  Since we are interested in addressing
certain basic observability questions (rather than problems of
ill-posedness or poor conditioning~\cite{ehn96}), we assume that there
is no noise in the observations, i.e.\,
{
\begin{equation}
y^{(o)}_j(t) = \eta^{(t)}(x_j,t), \quad j=1,\dots,N_{\text{obs}},
\label{eq:m}
\end{equation}
where $\{y^{(o)}_j(t)\}$ are observations of the surface displacement
based on the true solution $\eta^{(t)}(x,t)$ {generated by} the
true initial condition $\phi^{(t)}(x)$.}

This approach, often referred to as ``4D-VAR'', has received much
attention in the literature~\cite{t05}, particularly for applications
to operational weather forecasting \cite{k03} and climate modelling.
Given the ubiquity of the SWE as a model in Earth sciences,
variational data assimilation approaches for this system have already
received significant attention~\cite{znz94,pm01a,Belanger2003,vkn14,TIRUPATHI2016198}.

Indeed, relatively simple area-limited finite difference SWE models
are often used to test algorithms and answer basic questions about
4D-VAR~\cite{Steward/etal:2012}. A key difference between the
4D-VAR formulations used in weather and climate models and
the data assimilation employed for the reconstruction of tsunami waves
motivating the present study is that in the latter case the
observation data is extremely sparse, which immediately raise the question of observability.

In the context of the 4D-VAR approaches applied to the 2D SWE
the question of observability was first addressed by Zou et al~\cite{Zou/Navon/LeDimet:1992}. They investigated this problem in the finite-dimensional (discrete) setting by studying the positive-definiteness of the Hessian of the error functional and by invoking the ``rank'' condition known from the linear control theory~\cite{s94}. Their main finding was that the discretized 2D SWE is
observable even if only one or two of the three variables $(u, v, \eta)$
are observed at {\it all\/} grid points.  However, the authors did not
determine lower bounds for the number of observations (e.g.\ the smallest number and locations of grid points on which one state variable needs to be measured in order to ensure observability).
This result is therefore less relevant for the case of tsunami prediction
where observations are typically available at only a handful of grid
points. Our study resolves this question for the 1D SWE by
determining {\it minimum\/} sufficient conditions on the observations which ensure correct reconstruction the initial conditions.

It is also important to note that, unlike previous work on 4D-VAR
which is applied to discretized (finite-dimensional) models, we
consider the continuous infinite-dimensional case.  {Note that
  infinite dimensional data assimilation, based on the evolution of a
  partial differential equation, are often referred to as distributed
  parameter systems.}  As a result, our findings are independent of
discretization and are therefore more general. It should also be
emphasized that the techniques used by Zou et
al~\cite{Zou/Navon/LeDimet:1992} to study observability do not
generalize in a straightforward manner to infinite
dimensions~\cite{z95}. We therefore have to rely on other approaches.

We add that here we are not interested primarily in the optimal choice
of observations, or which gradient minimization algorithm should be
used for optimal convergence (although we did verify that our
results are qualitatively the same for both gradient descent and
conjugate gradient algorithms).  However, our findings do
provide fundamental insights about the structure of the observational
data needed to make 4D-VAR approaches operationally viable in cases,
like tsunami prediction, where only a small number of sparsely
distributed observation locations is available.

There has already been significant progress made in analyzing the
observability and controllability of one-dimensional waves in
{\it bounded} domains.  Unlike parabolic problems, in which
information propagates at an infinite velocity, such questions are
more subtle in the case of the wave equation because of the finite
speed at which information travels in such systems~\cite{Zuazua:2005}.
As a result, the observation time $T$ tends to play an important role.

One such case is the ``boundary measurements'' problem where observations
of wave height are acquired at one end of a finite domain.  In
general, exact observability is achieved for linear, semilinear and
quasilinear wave equations provided the observation time $T\geq2L/c$,
where $L$ is the size of the spatial domain and $c$ is the wave
speed~\cite{Zuazua:2005,Li:2006,Guo:2010}; this property can also be
deduced from analogous controllability results~\cite{Coron2007}.
Intuitively, observability (i.e. recovering the true initial
conditions from observations) is possible under this criterion because
all information from both right-going and left-going waves is
recovered during the observations, as the waves have enough time to
reflect from the boundaries.  

In contrast, here we consider multiple observation locations on an {\it infinite\/} domain, but on one
side only of the support of the initial condition.  Since in the
absence of the boundaries the waves do not reflect, in this case it is
not clear {\it a priori\/} whether waves moving in opposite directions
are observable, even with multiple observation points. Nevertheless,
we demonstrate that in fact observability is achieved with two or more
observation locations provided the locations are chosen appropriately
(i.e.\ at least one pair of observation points must be sufficiently
close together). Numerical computations are then used to illustrate
how this criterion affects the accuracy of the variational data
assimilation both in the linear and nonlinear setting.

The structure of the paper is as follows: in the next section we
recall the standard gradient-based approach to solution of the
variational data assimilation problem for systems~\eqref{eq:SWEnl} or
\eqref{eq:SWEl}, where we follow the ``optimize-then-discretize''
paradigm~\cite{g03}; then, in Section~\ref{sec:analytic}, an analytic
solution of the data-assimilation problem in the linear setting is
presented which leads, in Section~\ref{sec:fixedpoint}, to a criterion
ensuring convergence of iterations to the true solution. Computational
tests validating and illustrating this criterion are then presented in
Section~\ref{sec:num}, whereas conclusions and final comments are
deferred to Section~\ref{sec:concl}.

\section{Data assimilation for the one-dimensional shallow water equations\label{sec:assimil}}
Our goal is to use a variational optimization approach to construct
the best approximation {$\phi^{(b)}$} to the true initial
condition {$\phi^{(t)}$} in \eqref{eq:SWEnlc} or
\eqref{eq:SWElc} given observations $\{y^{(o)}_j(t)\}$,
cf.~\eqref{eq:m}. It is assumed that all observation points are
located on one side (e.g. to the right) of the support of the true
initial data, i.e.,
\begin{equation}
\forall j=1,\dots, N_{\text{obs}} \qquad x_j > \operatorname{max}_x \left[ \supp \phi^{(t)}(x) \right],
\label{eq:xj}
\end{equation}
so that only right-going waves can pass through the observation
points, and the observations are made over the interval $0<t\leq T$,
where $T$ is the observation time. {The notation used for the
  data assimilation is summarized in Table~\ref{tab:notation}.}
\begin{table}
\begin{tabular}{|l|l|}
\hline
{\bf Symbol} & {\bf Definition} \\
\hline
$\eta(x,t)$ & general solution for the height perturbation \\
$\phi(x)$ & general initial condition, i.e.\ $\phi(x) := \eta(x,0)$\\
$\eta^{(t)}(x,t)$ & true solution for the height perturbation $\eta(x,t)$ \\
$\phi^{(t)}(x)$ & true initial condition, i.e.\ $\phi^{(t)}(x) := \eta^{(t)}(x,0)$ \\
$\phi^{(g)}(x)$ & starting guess for the initial condition \\
$\phi^{(n)}(x)$ & approximate initial condition at iteration $n$ of the assimilation algorithm, i.e.\ $\phi^{(0)}:=\phi^{(g)}$\\
$\phi^{(b)}(x)$ & best approximation to the initial condition (e.g.\ fixed point of iterations)\\
$y^{(o)}_j(t)$ & observations of the true height perturbation at positions $\{x_j\}, j = 1,\ldots,N_{\text{obs}}$ \\
$\eta^{(f)}(x,t)$ & approximate (``forecast'') solution generated by an approximate initial condition  \\
${\cal J}^{(n)}$ & cost function at iteration $n$\\
$(\cdot)^*$ & adjoint \\
\hline
\end{tabular}
\caption{\label{tab:notation} {Notation used in data assimilation of the SWE to find the optimal initial conditions from observations
    of the height perturbation. We have followed the notation proposed in \cite{Ide/etal:1997}.}}
\end{table}

As is typical in data assimilation problems, the question of finding
{the best estimate of} the initial data {$\phi^{(b)}$}
such that the corresponding solution of system \eqref{eq:SWEnl} or
\eqref{eq:SWEl} optimally matches the available observations
$y^{(o)}_j(t)$, $j=1,\dots, N_{\text{obs}}$, is framed as a
PDE-constrained optimization problem where a least-squares error
functional $\J \; : \; L^2(\RR) \rightarrow \RR$ defined as
\begin{equation}
\J(\phi) = \frac {1}{2} \int_0^T \sum_{j=1}^{N_\text{obs} } \left[\eta^{(f)}(x_j,t) - y^{(o)}_j(t) \right]^2 \, \dee{t}
\label{eq:cost}
\end{equation}
is minimized with respect to the initial data $\phi$ under the
constraint that the `{`forecast"} solution $\eta^{(f)}(\cdot,\cdot)$ {corresponding to the initial condition $\phi$}
satisfies system \eqref{eq:SWEnl} or \eqref{eq:SWEl}. Then, since $\eta^{(f)}$ is the 
solution of \eqref{eq:SWEnl} or \eqref{eq:SWEl} corresponding to $\phi$, we obtain the following reduced
(unconstrained) formulation
\begin{equation}
\phi^{(b)} = \argmin_{\phi \in L^2(\RR)} \J(\phi).
\label{eq:minJ}
\end{equation}

We remark that in actual applications the error functional
\eqref{eq:cost} is typically augmented with a suitable Tikhonov
regularization term which provides stability in the presence of
observation noise \cite{ehn96} {and makes the problem well-posed.} {We also add that t}{T}he assumption that
the initial data $\phi$ belongs to the space $L^2(\RR)$ of
square-integrable functions on $\RR$ is {enough} {sufficient} to ensure existence of
(possibly non-smooth) solutions to the initial-value problems
\eqref{eq:SWEnl} and \eqref{eq:SWEl}. Local minimizers $\phi^{(b)}$
are then characterized by the condition expressing the vanishing of the
G\^ateaux (directional) differential of $\J(\phi)$, i.e,
\begin{equation}
\forall _{\phi' \in L^2(\RR)} \quad \J'(\phi^{(b)};\phi') = 0,
\label{eq:dJ}
\end{equation}
where $\J'(\phi;\phi') = {\lim_{\epsilon\rightarrow 0}}\left[\J(\phi+\epsilon\phi')
  -\J(\phi)\right]/\epsilon$. We add that in general condition \eqref{eq:dJ}
describes only critical points of the error functional. The local
minimizers can be found as $\phi^{(b)} = \lim_{n\rightarrow
  \infty}\phi^{(n)}$ using an iterative gradient-descent procedure
\begin{equation}
\begin{aligned}
\phi^{(n+1)} &= \phi^{(n)} - \gamma_n \nabla\J(\phi^{(n)}), \\
\phi^{{(0)}} &= \phi^{(g)},
\end{aligned}
\label{eq:iter}
\end{equation} 
where $\phi^{(g)}$ is a suitable initial guess. A key element of the
iterative procedure \eqref{eq:iter} is evaluation at every iteration
$n$ of the gradient $\nabla\J(\phi^{(n)})$ which represents an
infinite-dimensional sensitivity of the error functional
\eqref{eq:cost} with respect to perturbations of the initial data
$\phi$. It can be obtained from the G\^ateaux differential
$\J'(\phi;\phi')$ by invoking the Riesz representation theorem
as~\cite{b77}
\begin{equation}
\forall _{\phi' \in L^2(\RR)} \quad \J'(\phi;\phi') = 
\big\langle \nabla\J(\phi), \phi' \big\rangle_{L^2(\RR)},
\label{eq:Riesz}
\end{equation}
where $\langle f, g \rangle_{L^2(\RR)} = \int_\RR fg \, dx$ is the
standard $L^2(\RR)$ inner product. A convenient expression for the
gradient is then obtained using the adjoint calculus as
\begin{equation}
\nabla \J = - \eta^* (x,0),
\label{eq:gradJ}
\end{equation}
where $\eta^*$ satisfies 
\begin{subequations}
\label{eq:aSWEl}
\begin{align} 
\eta^*_t + u^*_x =& -  \sum_{j=1}^{N_\text{obs} }\left[\eta^{(f)}(x_j,t) - y^{(o)}_j(t) \right] \delta(x-x_j), \label{eq:aSWEla} \\
u^*_t + \eta^*_x =& 0, \label{eq:aSWElb} \\
\eta^*(x,T) = \ {u^*}(x,T) =& 0, \label{eq:aSWElc}
\end{align}
\end{subequations}
when the system evolution is governed by the {\it linear} shallow
water system \eqref{eq:SWEl}, and
\begin{subequations}
\label{eq:aSWEnl}
\begin{align} 
\eta^*_t + u \eta^*_ x  + u^*_x =& -  \sum_{j=1}^{N_\text{obs} }\left[\eta^{(f)}(x_j,t) - y^{(o)}_j(t) \right] \delta(x-x_j), \label{eq:aSWEnla} \\
u^*_t + (1+\eta)\eta^*_x + u u^*_x =& 0, \label{eq:aSWEnlb} \\
\eta^*(x,T) = \ {u^*}(x,T) =& 0, \label{eq:aSWEnlc}
\end{align}
\end{subequations}
when the system evolution is governed by the {\it nonlinear\/} shallow
water system~\eqref{eq:SWEl}. In both cases the adjoint variables
$\eta^*$ and $u^*$ vanish at infinity sufficiently rapidly.

Since the derivations leading to relations
\eqref{eq:gradJ}--\eqref{eq:aSWEnl} are standard and involve only
integration by parts together with making suitable choices of the
source terms and initial/boundary data in
systems~\eqref{eq:aSWEl}--\eqref{eq:aSWEnl}, we skip these steps here
for brevity and refer the reader to the monograph~\cite{g03} for
details.  We add that when the initial data $\phi$ belongs to a
(Hilbert) space of functions with higher regularity, such as one of
the Sobolev spaces $H^p(\RR)$, $p \ge 1$, then a similar formalism
based on the Riesz representation~\eqref{eq:Riesz} can be followed to
determine the corresponding Sobolev gradients~\cite{pbh04}.

In the gradient-descent formula~\eqref{eq:iter} the step size $\gamma_n$ along the gradient direction 
$\nabla\J(\phi^{(n)})$ is computed optimally by solving, at every iteration $n$, a line-minimization problem
\begin{equation}
\gamma_n = \argmin_{\gamma > 0} \J( \phi^{(n)} - \gamma \nabla\J(\phi^{(n)}) )
\label{eq:gamman}
\end{equation}
which is done efficiently using standard line-search
techniques~\cite{nw00}. Finally, we add that, while in actual
applications one typically uses more efficient minimization approaches
such as the conjugate-gradient or quasi-Newton methods, here we focus
on the simple gradient descent approach~\eqref{eq:iter} as it provides
a clear perspective on the observability issues which are the main
focus of this study.

In the next section we show how solutions to the minimization
problem~\eqref{eq:minJ} can be characterized analytically in function
of the available observation data $y^{(o)}_j(t)$, $j=1,\dots,
N_{\text{obs}}$, when the system evolution is governed by the linear
SWE \eqref{eq:SWEl}. This allows us to establish conditions under
which iterations~\eqref{eq:iter} converge to the true initial data
{$\phi^{(t)}$}.

\section{Analytic solution of the data assimilation problem in the linear setting\label{sec:analytic}}
In Section~\ref{sec:assimil} we obtained expression~\eqref{eq:gradJ}
for the gradient of the error functional and the adjoint
system~\eqref{eq:aSWEl} corresponding to the linear
SWE~\eqref{eq:SWEl}.  We now solve these equations analytically to
find the exact form of the gradient of the error functional under the
assumption of linear evolution.  The adjoint system~\eqref{eq:aSWEl}
must be solved backwards in time from $t=T$ to $t=0$ to find the
gradient which is defined in terms of its solution evaluated at $t=0$,
cf.~\eqref{eq:gradJ}.

For convenience, we introduce the backwards time variable $\tau = T-t$
and define vectors 
{
\begin{eqnarray}
U & = & [\eta^*\gap u^*]^T, \\
F & = & \left[-\sum_{j=1}^{N_{\text{obs}}}\left[\eta^{(f)}(x_j,t) - y^{(o)}_j(t) \right] \delta(x-x_j) \gap 0 \right]^T, \label{eq:force}
\end{eqnarray}
where $F$ is the source term for the adjoint system.} In
terms of these new variables the adjoint equations
\eqref{eq:aSWEla}--\eqref{eq:aSWElb} become
\begin{equation}
U_\tau + A U_x = -F, \label{eq:adj2}
\end{equation}
where
\[
A = \left[\begin{array}{rr} 0 & -1\\-1 & 0 \end{array}\right].
\]
We now diagonalize \eqref{eq:adj2}, obtaining
\begin{equation}
V_\tau + D V_x = d, \label{eq:adj3}
\end{equation}
where
\[
V=P^{-1}U, \quad d = -P^{-1} F, \quad
P = \left[\begin{array}{rr} 1 & -1\\1 & 1 \end{array}\right], \quad \text{and} \quad  
D = \left[\begin{array}{rr} -1 & 0\\0 & 1 \end{array}\right].
\]
The two characteristics of the hyperbolic system \eqref{eq:adj3} arriving at point $(x,\tau)$ are
\begin{align}
\begin{split}
X_1(\tau,x,T) & =  -\tau + x +T,\\
X_2(\tau,x,T) & =  \tau+x -T.
\end{split}
\label{eq:characteristics}
\end{align}
Since the initial (or terminal in terms of the original variable $t$) condition for the adjoint system \eqref{eq:adj3} is zero, its
solution may be written as~\cite{John:1982}
\begin{equation}
V_i(x,T) = 0 + \int_0^T d_i(X_i(\tau,x,T),\tau)\, \dee{\tau}, \gap i=1,2.
\label{eq:back_sol}
\end{equation}
The gradient of the error functional, cf.~\eqref{eq:gradJ}, can now be expressed {in terms of the solution (\ref{eq:back_sol})} as 
{
\begin{eqnarray}
\nabla \J & = & -\eta^*(x,\tau=T) \nonumber \\
& = & -U_1(x,T)\nonumber\\
& = &  -[PV]_1 = V_2(x,T) - V_1(x,T), 
\label{eq:gradJ2}
\end{eqnarray}
}
where the subscripts~$[\cdot]_1$ {and $[\cdot]_2$} denote components of the relevant vector.   

{Therefore, to find the gradient of the error functional at the
  first iteration we simply need to evaluate the integrals
  (\ref{eq:back_sol}).  This requires expressing $d_i$ in terms of the
  source term (\ref{eq:force}) as $d = -P^{-1}F$ and then integrating
  from $\tau = 0$ to $T$ along the characteristics
  (\ref{eq:characteristics}), i.e.\ back to the initial state $t = 0$.
  Using {the symbolic algebra software} {\tt maple} to evaluate
  this integral, }the gradient at the first iteration is {found
  to be}
\begin{equation}
\nabla \J(x;\phi^{(0)}) = -\frac{N_{\text{obs}}}{4} \left[ (\phi^{(t)}(x) - \phi^{(0)}(x))  + 
\frac{1}{N_{\text{obs}}}\sum_{j=1}^{N_{\text{obs}}} (\phi^{(t)}(2x_j-x) - \phi^{(0)}(2x_j-x))\right]. 
\label{eq:analytic_grad}
\end{equation}
in which the dependence of the gradient on the independent variable $x$ had to be made explicit. {(Note that the gradient may also be derived in a less straightforward way from \eqref{eq:SWEl} using the solution of the linear wave equation and Duhamel's principle for inhomogeneous partial differential equations.)} Then, the gradient descent iteration~\eqref{eq:iter} for the data-assimilation problem in the linear setting is
\begin{multline}\label{eq:grad_descent}
\phi^{(n+1)}(x) = \phi^{(n)}(x) +  \\
\gamma_n\frac{N_{\text{obs}}}{4}\left[ (\phi^{(t)}(x) - \phi^{(n)}(x))  + \frac{1}{N_{\text{obs}}}
\sum_{j=1}^{N_{\text{obs}}} (\phi^{(t)}(2x_j-x) - \phi^{(n)}(2x_j-x))\right]. 
\end{multline}
In the next Section we analyze the conditions under which the iterative sequence $\{\phi^{(n)}\}$, $n=1,2,\dots$,  converges to the true initial condition {$\phi^{(t)}$}.

\section{Sufficient conditions for convergence to true initial condition\label{sec:fixedpoint}}
Our goal in this section is to establish conditions under which the
iterates {$\phi^{(n)}$} generated by the gradient descent
algorithm~\eqref{eq:grad_descent} converge to the true solution {$\phi^{(t)}$}.
We assume that the Fourier transform {$\widehat{\phi^{(t)}}(k)$} of
the true initial data, where $k \in \RR$ is the wavenumber, is
compactly supported, i.e.,
\begin{equation}
\exists \ k_{\text{max}}>0 \; :  \quad {\widehat{\phi^{(t)}}(k) }= 0 \ \text{if} \ |k| > k_{\text{max}}.
\label{eq:fhat}
\end{equation}
The main result can be stated as the following theorem.
\begin{theorem}
\label{theorem:fixedpoint}
{Suppose Then, t}{T}he iterative sequence~\eqref{eq:grad_descent} converges in the $L^2(\RR)$ norm to the true initial data {$\phi^{(t)}$}, i.e.,
\begin{equation}
\lim_{n \rightarrow \infty} \| \phi^{(n)} - {\phi^{(t)}} \|_{L^2(\RR)} = 0
\label{eq:convergence}
\end{equation}
provided the following condition holds
\begin{equation}
|\widehat{\psi}(k)| = \left|\frac{1}{N_{\text{obs}}} \sum_{j=1}^{N_{\text{obs}}} e^{i k 2 x_j}\right|  \neq1
\quad \forall k \in \RR. \label{eq:fp_condition}
\end{equation}
\label{thm:conv}
\end{theorem}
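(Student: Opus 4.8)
The plan is to recast the iteration~\eqref{eq:grad_descent} as a linear recursion for the error $e^{(n)} := \phi^{(n)} - f$ and to diagonalize it in Fourier space. Subtracting $f$ from both sides of~\eqref{eq:grad_descent} and writing the reflection operators $(R_j g)(x) := g(2x_j - x)$, the iteration becomes $e^{(n+1)} = e^{(n)} - \gamma_n \mathcal{L} e^{(n)}$, where $\mathcal{L} := \tfrac{1}{4}\big(N_{\text{obs}} I + \sum_{j=1}^{N_{\text{obs}}} R_j\big)$ is precisely the Hessian of the quadratic functional~\eqref{eq:cost} in the linear setting. Since each $R_j$ is a self-adjoint involution, $\mathcal{L}$ is self-adjoint with $0 \le \mathcal{L} \le \tfrac{1}{2}N_{\text{obs}} I$, so the whole problem reduces to analysing a linear fixed-point iteration driven by a bounded positive operator whose only possible fixed point, when $\mathcal{L}$ is injective, is $e \equiv 0$, i.e.\ $\phi^{(n)} \to f$.

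First I would pass to the Fourier transform and use Plancherel's theorem to measure the error in $L^2$. Because $\widehat{R_j g}(k) = e^{-2ikx_j}\hat{g}(-k)$, the operator $\mathcal{L}$ does not act diagonally but couples each wavenumber $k$ to its reflection $-k$; restricted to the two-dimensional space spanned by $\{\hat{e}(k),\hat{e}(-k)\}$ it is represented by the Hermitian matrix $\tfrac{1}{4}N_{\text{obs}}\left[\begin{smallmatrix} 1 & \overline{\hat{\psi}(k)} \\ \hat{\psi}(k) & 1 \end{smallmatrix}\right]$, whose eigenvalues are $\tfrac{1}{4}N_{\text{obs}}\big(1 \pm |\hat{\psi}(k)|\big)$. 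Diagonalizing this $2\times 2$ block turns the coupled recursion into two decoupled scalar recursions with multipliers $1 - \gamma_n \tfrac{1}{4}N_{\text{obs}}\big(1 \pm |\hat{\psi}(k)|\big)$.

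The role of hypothesis~\eqref{eq:fp_condition} now becomes transparent: the smaller eigenvalue $\tfrac{1}{4}N_{\text{obs}}(1-|\hat{\psi}(k)|)$ vanishes exactly when $|\hat{\psi}(k)| = 1$, so the condition $|\hat{\psi}(k)| \ne 1$ is precisely what guarantees that $\mathcal{L}$ has no nontrivial kernel and that both multipliers lie strictly inside $(-1,1)$ once $\gamma_n$ is taken in the admissible range $\big(0, 4/N_{\text{obs}}\big)$ dictated by the bound $\mathcal{L} \le \tfrac{1}{2}N_{\text{obs}} I$. (At $k=0$ the block degenerates to the single positive eigenvalue $\tfrac{1}{2}N_{\text{obs}}$, so no null direction arises there despite $|\hat{\psi}(0)|=1$.) I would then invoke Plancherel together with the spectral theorem to write $\|e^{(n)}\|_{L^2}^2 = \int |1-\gamma\lambda|^{2n}\, d\langle E_\lambda e^{(0)}, e^{(0)}\rangle$ for a fixed admissible step $\gamma$, and let $n\to\infty$: since the integrand is bounded by the total mass $\|e^{(0)}\|_{L^2}^2$ and tends to zero pointwise for every $\lambda>0$, while the spectral measure carries no atom at $\lambda=0$ by injectivity, the dominated convergence theorem yields~\eqref{eq:convergence}.

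The main obstacle I anticipate is that the contraction is not uniform across Fourier modes. Because $\hat{\psi}$ is an almost-periodic trigonometric sum with $\hat{\psi}(0)=1$, one has $|\hat{\psi}(k)|\to 1$ as $k\to0$, so the relevant eigenvalue $\tfrac14 N_{\text{obs}}(1-|\hat{\psi}(k)|)$ accumulates at $0$; thus $\mathcal{L}$ is injective but \emph{not} coercive, which rules out a naive geometric-contraction estimate with a single rate. The compact-support assumption~\eqref{eq:fhat} is what makes the dominated-convergence argument clean, since all iterates remain band-limited and the offending spectrum sits on a set of zero spectral mass. A secondary technical point is that the step $\gamma_n$ in~\eqref{eq:gamman} is chosen by line minimization and is therefore data-dependent rather than fixed; I would handle this by noting that exact line search decreases the quadratic functional~\eqref{eq:cost} at least as much as any admissible fixed step, so the monotone decay of $\J(\phi^{(n)})$ combined with the injectivity of $\mathcal{L}$ transfers the fixed-step conclusion to the line-search iteration.
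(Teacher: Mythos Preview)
Your approach shares the paper's central device---pass to Fourier space so that the fixed-point condition reduces to the algebraic constraint $|\hat\psi(k)|=1$---but is considerably more complete. The paper's proof is short: it writes the fixed-point equation for $g=f-\tilde\phi$, Fourier-transforms to $\hat g(k)+\hat\psi(k)\,\hat g^*(k)=0$, conjugates and substitutes back to obtain $\hat g(k)=|\hat\psi(k)|^2\,\hat g(k)$, and concludes $\hat g\equiv0$ under~\eqref{eq:fp_condition}. That argument establishes only that any fixed point must equal $f$; it never shows the iterates actually converge, and it never confronts the $k=0$ issue you flag (where $|\hat\psi(0)|=1$ unavoidably). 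Your $2\times2$ block diagonalization, the eigenvalue computation $\tfrac14 N_{\text{obs}}(1\pm|\hat\psi(k)|)$, and the spectral-measure plus dominated-convergence argument for a fixed admissible step are all correct and supply precisely the convergence analysis the paper omits; your remark that the null eigenvector at $k=0$ is orthogonal to the constraint $\hat e(0)\in\RR$ is also correct and repairs a real gap in the hypothesis as stated.

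The one step that remains soft is the transfer from fixed step to the exact line search~\eqref{eq:gamman}. Monotone decrease of $\J(\phi^{(n)})=\tfrac12\langle\mathcal L e^{(n)},e^{(n)}\rangle$ yields only $\|\mathcal L^{1/2}e^{(n)}\|\to0$, which by itself does not force $\|e^{(n)}\|_{L^2}\to0$ when $\mathcal L$ is injective but non-coercive---and $\mathcal L$ is indeed non-coercive here, since the lower eigenvalue $\tfrac14 N_{\text{obs}}(1-|\hat\psi(k)|)$ vanishes as $k\to0^+$ even on the band-limited subspace. To close this you would need to control the line-search steps, e.g.\ show $\gamma_n$ stays in a compact subinterval of $(0,4/N_{\text{obs}})$ so that your spectral argument extends with $\prod_m(1-\gamma_m\lambda)$ in place of $(1-\gamma\lambda)^n$; the lower bound $\gamma_n\ge 2/N_{\text{obs}}$ follows from $\mathcal L\le\tfrac12 N_{\text{obs}}I$, but the upper bound is not automatic. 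The paper does not attempt this step either, so your proposal already goes well beyond what the paper proves.
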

\begin{proof}
Equation~(\ref{eq:grad_descent}) shows that the fixed point  \[\phi^{(b)}= \lim_{n \rightarrow \infty} \phi^{(n+1)} =  \lim_{n \rightarrow \infty} \phi^{(n)}\] satisfies the relation
\begin{equation}
(\phi^{(t)}(x) - \phi(x))  + \frac{1}{N_{\text{obs}}}\sum_{j=1}^{N_{\text{obs}}} (\phi^{(t)}(2x_j-x) - \phi(2x_j-x)) = 0. \label{eq:fp1}
\end{equation}
Defining the reconstruction error at the fixed point as 
$g = \phi^{(t)}-\phi^{(b)}$ and taking the Fourier transform of
\eqref{eq:fp1} we obtain
\begin{equation}
\widehat{g}(k) + \frac{1}{N_{\text{obs}}} \sum_{j=1}^{N_{\text{obs}}} e^{i k 2 x_j} \widehat{g}^*(k) = 0  
\gap  \forall k \in \RR,\label{eq:fp2}
\end{equation}
where $\widehat{g}(k)$ is the Fourier transform of $g(x)$, $i=\sqrt{-1}$
and $^*$ now indicates complex conjugate. Denoting $\widehat{\psi}(k) :=
\frac{1}{N_{\text{obs}}}\sum_{j=1}^{N_{\text{obs}}} e^{i k 2 x_j}$ and
taking the complex conjugate of~(\ref{eq:fp2}), we find that
$\widehat{g}^*(k) = \widehat{\psi}^*(k) \, \widehat{g}(k)$ and so the Fourier
transform of the reconstruction error at the fixed point must satisfy
the relation
\begin{equation}
\widehat{g}(k) = |\widehat{\psi}(k)|^2 \, \widehat{g}(k) \gap \forall  k \in \RR.\label{eq:fp3}
\end{equation}
Equation~(\ref{eq:fp3}) implies that {\it either\/}
$|\widehat{\psi}(k)|=1$ {\it or\/} $\widehat{g}(k) = 0$, in which case the
fixed point $\phi^{(b)}$ is precisely the true initial condition
{$\phi^{(t)}$}.  Therefore, relation~\eqref{eq:fp_condition} represents a
sufficient condition for the existence of a unique fixed point
corresponding to the true initial condition. Convergence in the
$L^2(\RR)$ topology follows from the isometry property of the Fourier
transform as a map $L^2(\RR) \rightarrow L^2(\RR)$.
{\qed}
\end{proof}

We now comment on the the insights provided by Theorem~\ref{thm:conv}.
First, we note that if $|\widehat{\psi}(k)|=1$ then any solution
$\widehat{g}(k)$ is possible, resulting in arbitrary forms of the
reconstruction error. In other words, if $|\widehat{\psi}(k)|=1$ for any
wavenumber $k$, the actual fixed point (i.e.\ the {best estimate
  for the} reconstructed initial condition $\phi^{(b)}$) depends
on the initial guess {$\phi^{(g)}$} and the gradient descent
iteration~\eqref{eq:grad_descent} will not, in general, converge to
the true initial condition {$\phi^{(t)}$}. We add that
expression~\eqref{eq:fp_condition} could be simplified by assuming
that the observation points $x_j$ are equispaced.

We now analyze the conditions under which
relation~(\ref{eq:fp_condition}) might not be satisfied, i.e.\ the
conditions where the gradient descent
algorithm~(\ref{eq:grad_descent}) might not converge to the correct
initial data. {We emphasize that for iterations
  \eqref{eq:grad_descent} to converge to the correct initial data, the
  condition $|\widehat{\psi}(k)| \neq 1$ must hold for {\it any}
  wavenumber $k$ and therefore, in principle, it needs to be analyzed
  for all values of this parameter. However, due to
  assumption~\eqref{eq:fhat} and the form of the gradient
  expression~\eqref{eq:analytic_grad}, which involves a linear
  combination of the true initial condition together with its shifted
  and scaled copies, the Fourier transform of the reconstruction
  error, $\widehat{g}(k)$, vanishes for $k>k_{\text{max}}$. Thus, we
  can restrict the range of wavenumbers for which condition
  \eqref{eq:fp_condition} needs to be tested to $0 \le k \le
  k_{\text{max}}$.}  

First consider the case of a single observation point,
$N_{\text{obs}}=1$.  In this case the condition for a non-unique fixed
point is
\begin{equation}
|e^{i k 2 x_1} |= 1. \label{eq:not_unique1}
\end{equation}
Clearly, condition~(\ref{eq:not_unique1}) is always satisfied
{regardless of the value of $k$} and the gradient descent
iteration~\eqref{eq:grad_descent} never converges to the true
initial condition (unless the initial guess {$\phi^{(g)}$} is exact).
Intuitively, there is not enough information in a single observation
to distinguish between the right-going and left-going parts of the SWE
solution.  There is an infinite set of initial conditions $\phi$
compatible with the observations and the actual fixed point found
depends on the initial guess {$\phi^{(g)}$}. For example, if
{$\phi^{(g)}=0$}, the fixed point solution is {$\phi^{(b)}(x) =
\frac{1}{2}(\phi^{(t)}(x) + \phi^{(t)}(2x_1-x))$}.

Next, consider the case of two observation points, $N_{\text{obs}}=2$.
In this case the condition for a non-unique fixed point is
\begin{equation}
\left|\frac{1}{2} e^{ik2x_1}\left(1+e^{ik2(x_2-x_1))} \right)\right| = 1. \label{eq:not_unique2}
\end{equation}
To satisfy condition (\ref{eq:not_unique2}) we require that
$e^{ik2(x_2-x_1)}=1$ or, equivalently,
\begin{equation}
x_2-x_1 = \frac{\pi}{k} n, \gap n = {0,1,2,}\ldots, \gap \forall k \in \RR. \label{eq:not_unique3}
\end{equation}
Therefore, a sufficient condition for the existence of a unique fixed
point corresponding to the true initial condition is
that~{(\ref{eq:not_unique3})} be not be satisfied for any
wavenumber $k$ for which $\widehat{g}(k)$ is non-zero. {Due to
  assumption~\eqref{eq:fhat} and the form of the gradient
  expression~\eqref{eq:analytic_grad}, which involves a linear
  combination of the true initial condition together with its shifted
  and scaled copies, the Fourier transform of the reconstruction
  error, $\widehat{g}(k)$, also vanishes for $k>k_{\text{max}}$.
  Therefore, we can guarantee that~(\ref{eq:not_unique3})} {We
  see that it} is never satisfied provided the spacing of the
observation points satisfies
\begin{equation}
x_2-x_1 < \frac{\pi}{k_{\text{max}}}, \label{eq:unique}
\end{equation}
or, $x_2-x_1 < \lambda_{\text{min}}/2$ where $\lambda_{\text{min}}$ is
the smallest ``scale'' of the true initial condition
{$\phi^{(t)}$}. Intuitively, this means that a sufficient
condition for convergence to the true initial condition is that the
observation points are closer than half the minimum wavelength of the
initial condition, which can be interpreted as analogous to the
Nyquist criterion.  Finer scale initial conditions mean the
observation points should be closer together.

The case of more than two observation points, $N_{\text{obs}}\geq3$,
is similar to the case of two observation points.  However, for more
than two observation points a sufficient condition for the existence
of a unique fixed point is that at least one pair of observation
points satisfies relation~(\ref{eq:unique}). This is indeed sufficient
to ensure condition~(\ref{eq:fp_condition}) is always satisfied since
\[
\left|\frac{1}{N_{\text{obs}}}\sum_{j=1}^{N_{\text{obs}}} e^{i k 2 x_j}\right| = 1
\] 
only if {\it every\/} pair of observation points
$(x_i, x_j)$ gives $|e^{i k 2(x_i-x_j)} |=1$.  We conclude that, in
general, if there exists a pair of observation points violating
condition~\eqref{eq:unique}, then the Fourier components of the
reconstructed initial data $\widehat{\phi}^{(b)}$ corresponding to the
wavenumbers $k_n = n \pi / (x_2 - x_1)$, $n=1,2,\dots$, will not be
reconstructed correctly.

From the computational point of view, we expect that if
condition~\eqref{eq:fp_condition} is satisfied, but the quantity
$|\widehat{\psi}(k)|$ is close to one for some $0 < k < k_{\text{max}}$,
the convergence of the corresponding Fourier components of the
reconstructed initial data $\widehat{\psi}$ will be slow. {This is
  because the quantity $|\psi(k)|$ may be interpreted as a
  wavenumber-dependent ``constant'' characterizing the linear rate of
  converge of the gradient iterations \eqref{eq:grad_descent} for
  individual Fourier components of the solution.}  
  
To close
this section, in Figure~\ref{fig:psihat} we {explore} the
behaviour of $|\widehat{\psi}(k)|$ {(and hence the convergence rate
  of the data assimilation)} {for equispaced observation points
  as a function of different problem parameters, namely, the
  wavenumber $k$, the number of observation points $N_{\text{obs}}$ 
  and their separation $\Delta x$. Since we
  assumed that $k_{\text{max}} = 30$, we consider a range of discrete
  wavenumbers $k \in \{ \Delta k, 2\Delta k, \dots, k_{\text{max}} \}$
  for two different resolutions $\Delta k = 1$ and $\Delta k =0.075$
  (which correspond to discretizations with grid sizes $h=\pi/\Delta
  k$ in physical space).  Therefore, the separation $\Delta x$ of 
  adjacent observation points must be less than $\pi/ k_{\text{max}} =
  \pi/30 \approx 0.1$ to recover the initial condition. In
  Figure~\ref{fig:psihat} we see that, indeed, the condition
  $|\widehat{\psi}(k)| < 1$ is satisfied for all $0 < \Delta x < \pi /
  k_{\text{max}}$ regardless of the values of $N_{\text{obs}}$ and
  $k$, which confirms the results of the analysis above.}

{Figure~\ref{fig:psihat} provides two key insights
  complementing the theoretical analysis. First, for
increasing $N_{\text{obs}}$, $|\widehat{\psi}(k)|$ decreases for all
$0 < \Delta x \lessapprox 0.1$. This suggests that increasing
  the number of observation points improves the rate of
  convergence.  Secondly, the results for different $k$ show
  that even when $0 < \Delta x < \pi / k_{\text{max}}$ (i.e.\ when the
  initial condition can be recovered exactly), $|\widehat{\psi}(k)|$ is
  arbitrarily close to one when the separation of the points $\Delta
  x$ is sufficiently small.  This suggests that observation points
  with vanishingly small spacing decelerate convergence and
    should be avoided.  
    
    Interestingly, in the discrete
  case, there are many observation point spacings larger than the
  critical spacing $\Delta x < \pi / k_{\text{max}}$ that allow 
   convergence to the correct initial data. This
  is because the ``bad'' spacings form only a sparse subset of
  the interval $[\pi / k_{\text{max}}, \infty )$, although the subset becomes
  increasingly space-filling as the resolution is refined, i.e., in
  the limit $\Delta k \rightarrow 0$.}
\begin{figure}
\begin{center}
\includegraphics[width=0.5\textwidth]{\figpath/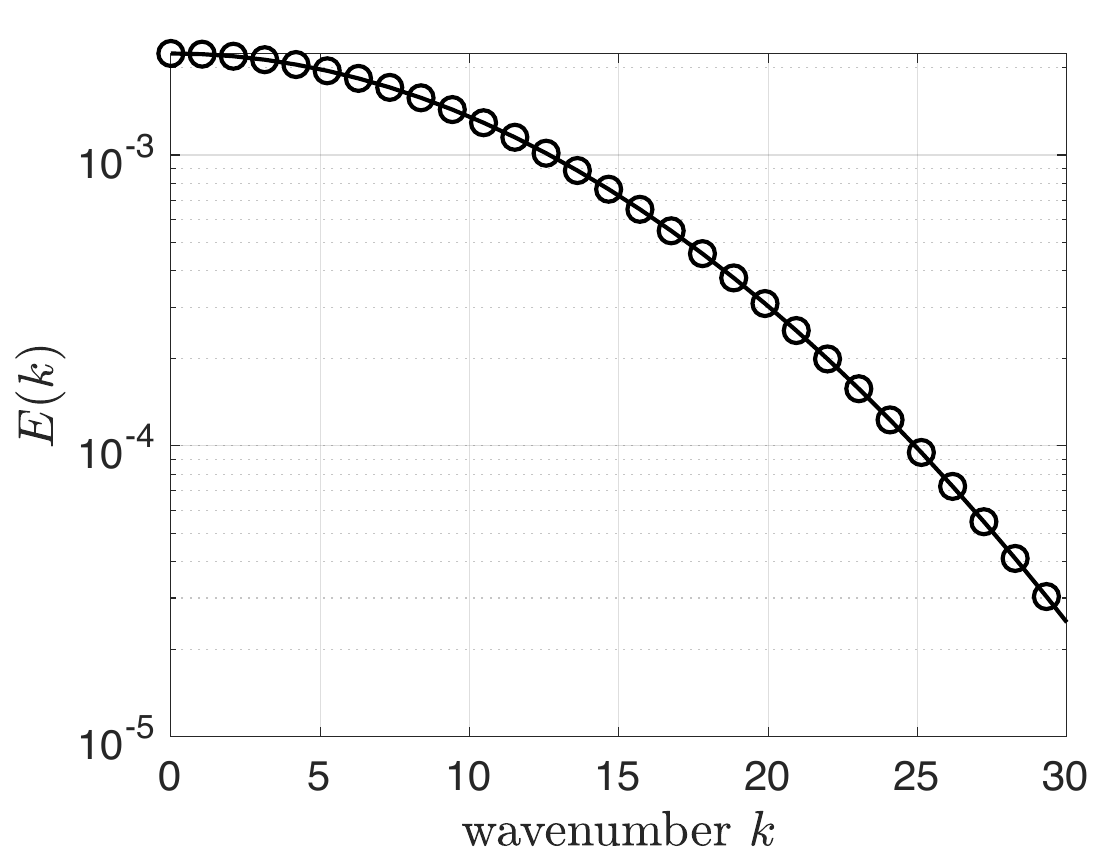}
\end{center}
\caption{Fourier energy spectrum of the height perturbation initial
  condition {$\phi^{(t)}(x) = \exp(-(10x)^2)/20$}. For {wavenumbers}
  $k>30$ the spectral energy densities are less than 1\% of their
  maximum values which suggests taking $k_{max}=30$ is a reasonable
  approximation of the support of this particular initial
  condition in the Fourier space, cf.~\eqref{eq:fhat}. This means that
  {observation points with separations} {$\Delta x <
    \pi/30 \lessapprox 0.1$} should satisfy the sufficient condition
  for convergence given in \eqref{eq:fp_condition}.}
\label{fig:ghat}
\end{figure}
\begin{figure}
\subfigure[$N_{\text{obs}} = 2$, $\Delta k =1.$]{\includegraphics[width=0.45\textwidth]{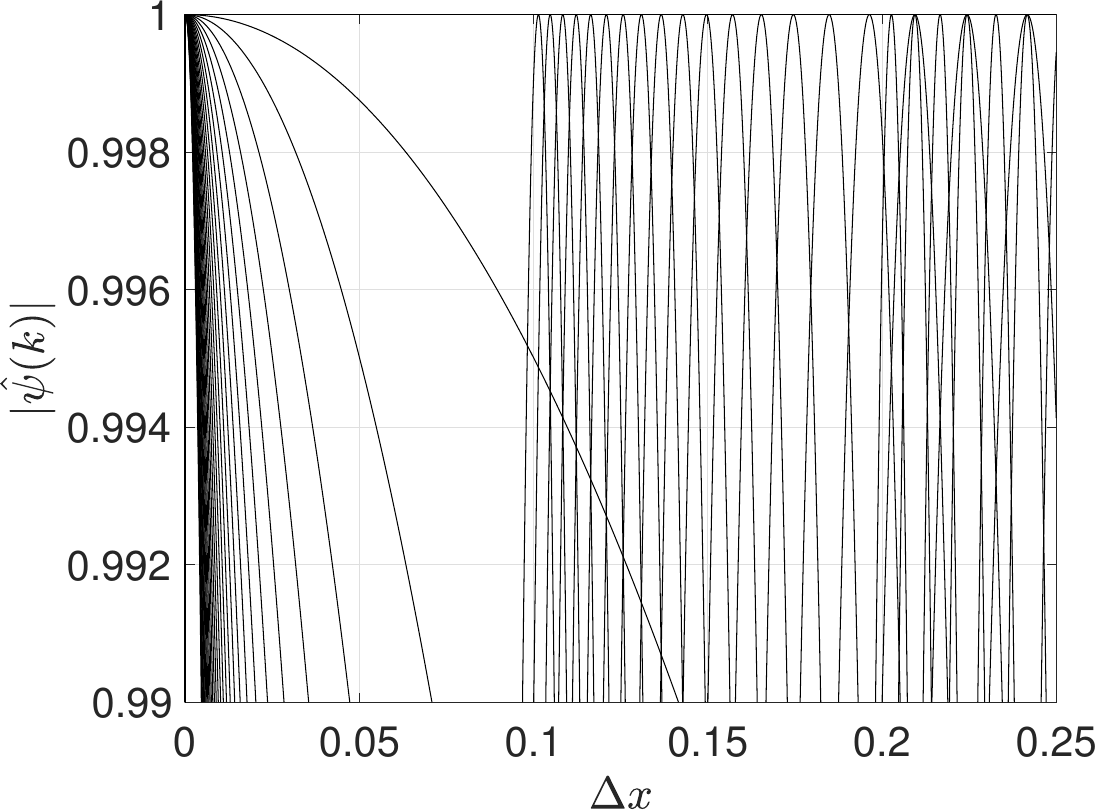}}\quad
\subfigure[$N_{\text{obs}} = 2$, $\Delta k =0.075.$]{\includegraphics[width=0.48\textwidth]{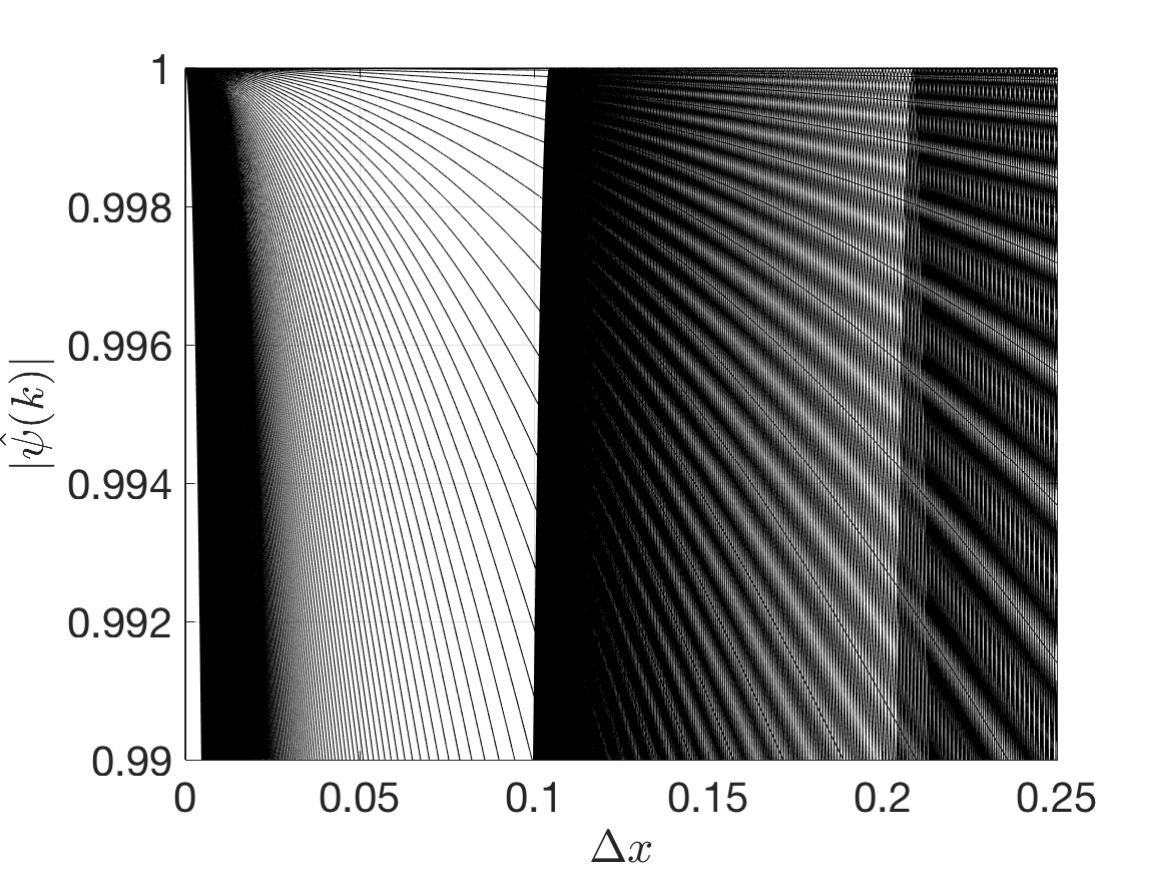}}
\subfigure[$N_{\text{obs}} = 4$, $\Delta k =1.$]{\includegraphics[width=0.45\textwidth]{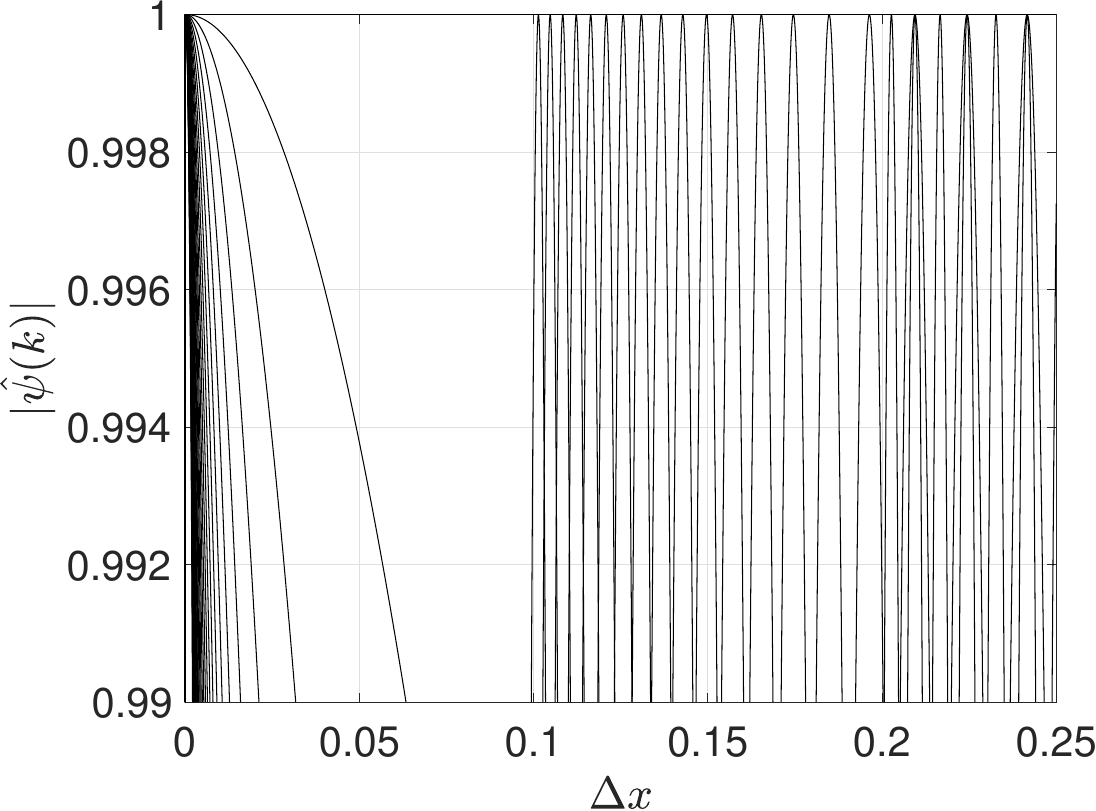}}\quad
\subfigure[$N_{\text{obs}} = 4$, $\Delta k =0.075.$]{\includegraphics[width=0.48\textwidth]{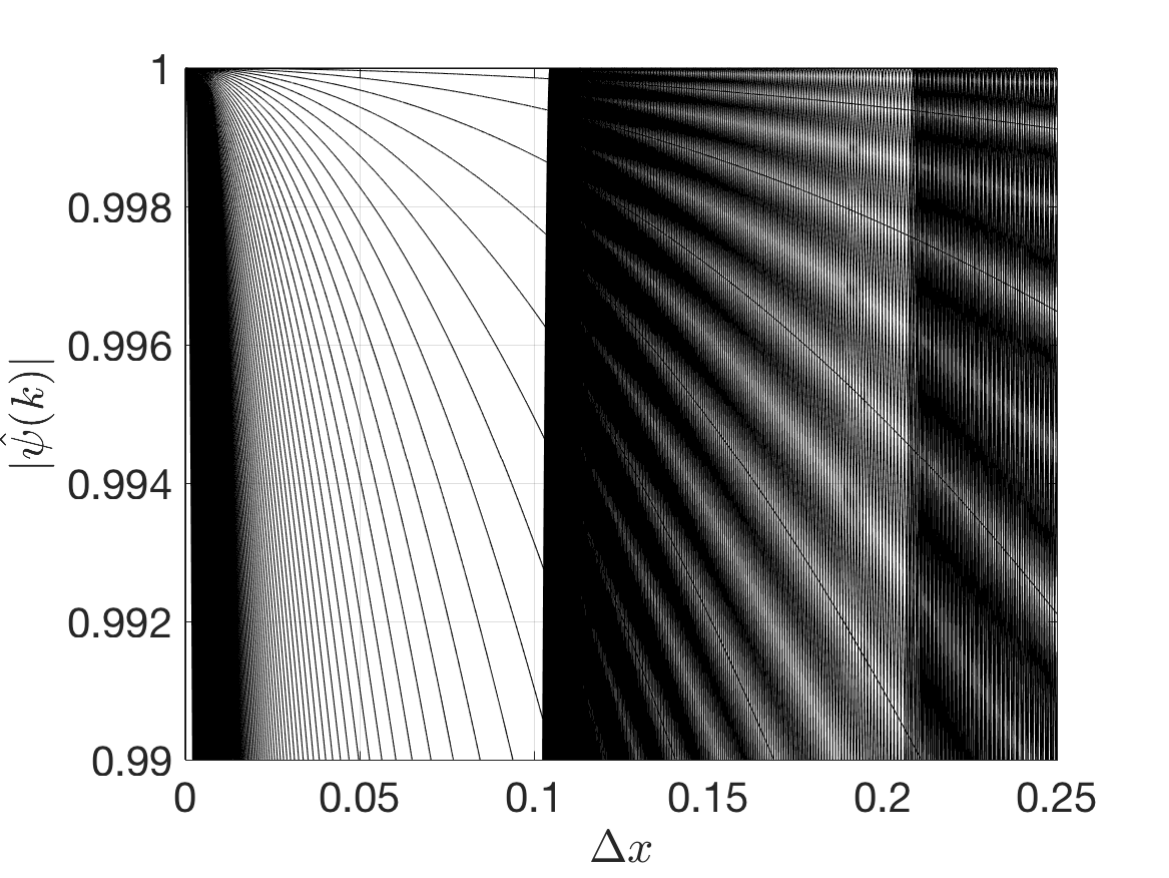}}
\subfigure[$N_{\text{obs}} = 100$, $\Delta k =1.$]{\includegraphics[width=0.45\textwidth]{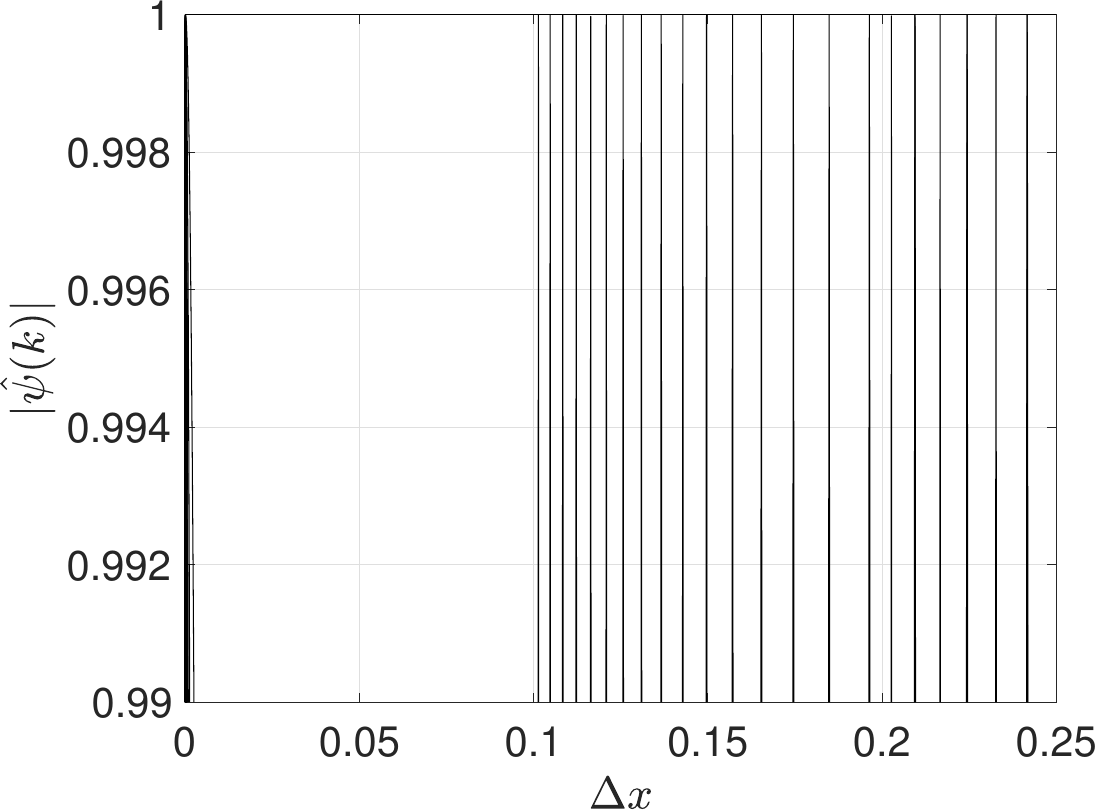}}\quad
\subfigure[$N_{\text{obs}} = 100$, $\Delta k =0.075.$]{\includegraphics[width=0.45\textwidth]{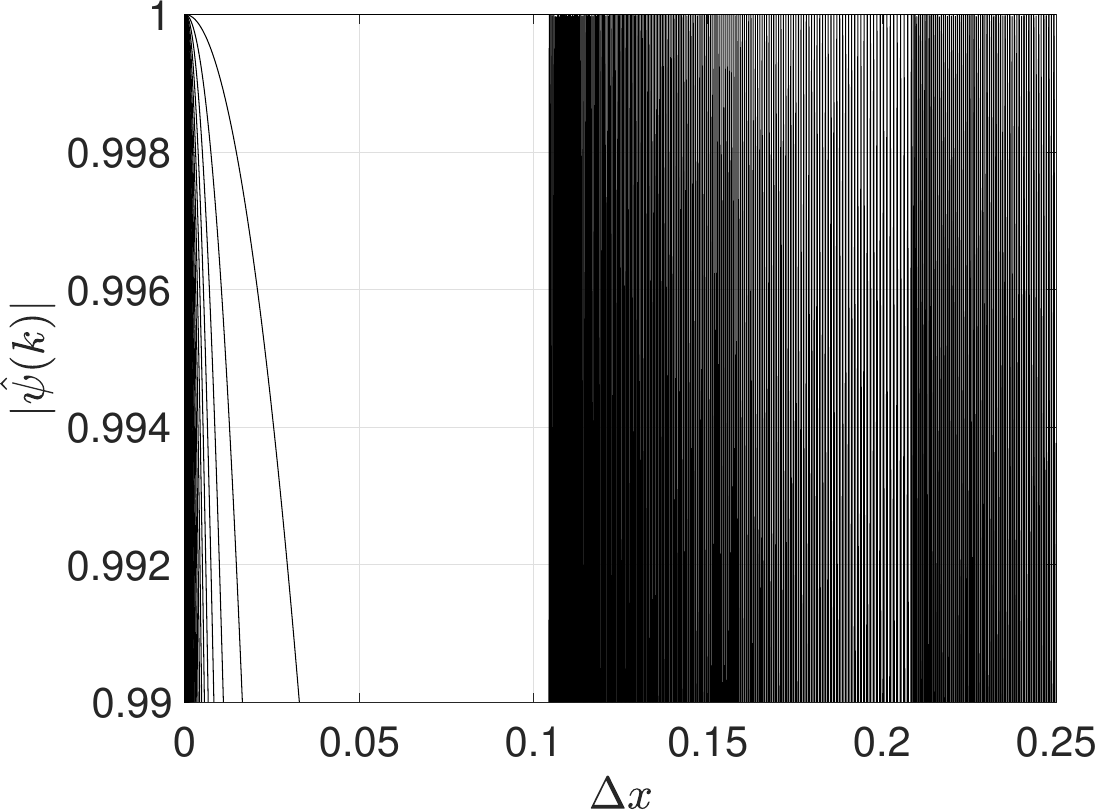}}
\caption{{Convergence properties of the {gradient approach
      \eqref{eq:grad_descent}} for varying numbers of observation
    points $N_{\text{obs}}$ when the real line is discretized in
    wavenumber space with resolution $\Delta k$.  Where
    $|\widehat{\psi}(k)|=1$, cf.~\eqref{eq:fp_condition}, the assimilation
    does not converge to the correct initial data and when
    $|\widehat{\psi}(k)|$ is close to 1 we expect convergence to be slow.
    A clear transition between convergence to the correct and
      incorrect initial data is evident at $\Delta x = \pi /
      k_{\text{max}}$ (where $k_{\text{max}} = 30$ fully resolves the
      initial condtion, Figure \ref{fig:ghat}).  Note that in this
    case the continuous problem converges for all $0<\Delta x<0.1$. In
    all cases the maximum wavenumber $k_{\text{max}} = 30$ to fully
    resolve the exact initial condition (see figure~\ref{fig:ghat}).
    We see that, when $\Delta x < \pi / k_{\text{max}}$, the
      convergence increases with increasing number of observation
    points and also with coarser discretizations of the real line.}}
\label{fig:psihat}
\end{figure}

In the next section we present computational examples illustrating how
these properties affect the accuracy and efficiency of the
gradient-based solution of the data assimilation problem for the wave
equation both in the linear and nonlinear regime.

\section{Numerical solution of the data assimilation problem\label{sec:num}}
Before presenting our computational results we briefly describe the
numerical approach. While in the linear setting both the direct and
adjoint problem \eqref{eq:SWEl} and \eqref{eq:aSWEl} can be solved
analytically, cf.~Section \ref{sec:analytic}, for ease of comparison
with the nonlinear case these problems are solved numerically.

\subsection{Numerical Methods\label{sec:methods}}
In order to simplify the numerical solution of the PDE problems, the
unbounded spatial domain is replaced with a large periodic domain
$[-L,L]$, where $L=3$ is chosen big enough to ensure that in
combination with the observation time $T=2$ there are no boundary
effects (i.e. the waves never get close to the domain boundary during
the time window $[0,T]$). Both the forward and backward (adjoint)
equations are discretized in space using a standard second-order
finite-difference/finite-volume method on an evenly spaced staggered
grid with $N=1024$ points. The height perturbation and velocity
variables $\eta(x,t)$ and $u(x,t)$ are represented at the cell centres
and at the cell boundaries, respectively.  The resulting system of
ordinary differential equations is integrated in time using a
third-order strong stability preserving Runge--Kutta
method~\cite{Spiteria/Ruuth:2002}. Consistency of adjoint-based
gradient evaluations in the linear and nonlinear setting was carefully
checked by evaluating the directional derivative $\J'(\phi;\phi')$
using formulas \eqref{eq:Riesz}--\eqref{eq:gradJ} and comparing the
results with a finite-difference approximation of
$\J'(\phi;\phi')$~\cite{Khan_MScThesis}. In the gradient-based
optimization algorithm \eqref{eq:iter} the line-minimization problem
\eqref{eq:gamman} is solved using the MATLAB function {\tt fminunc}.

\subsection{Computational Results\label{sec:results}}
In this section we present computational results illustrating the
effect of the observability criterion introduced in
Section~\ref{sec:fixedpoint} on the accuracy and efficiency of the
gradient-based solution of the data assimilation problem. 

In all cases
the true initial condition for the height variable $\eta(x,0)$ has the
form 
\[{\phi^{(t)}(x)} = \frac{1}{20}\exp^{-(10x)^2}\]
and is approximately supported on the
interval $[-0.3,0.3]$ in physical space and the interval $[-30,30]$ in
Fourier space, see figure~\ref{fig:ghat} (the initial condition for
the velocity variable $u(x,0)$ is zero, cf.~\eqref{eq:SWEnlc} and
\eqref{eq:SWElc}). We note that the choice of the support in Fourier space coincides with the value of $k_{\text{max}}$ used in Section~\ref{sec:fixedpoint}. Such localized initial conditions are
typical for the problem of tsunami waves propagation.  We emphasize
that the support of the initial condition {$\phi^{(t)}(x)$} in physical space is
ten times smaller than the size of the computational domain.

The true initial condition {$\phi^{(t)}$} is used in conjunction with the forward
model~\eqref{eq:SWEnl} or \eqref{eq:SWEl} and the observation operator
defined in~\eqref{eq:m} to generate the observations $y^{(o)}_j(t)$,
$j=1,\dots,N_{\text{obs}}$ used in the data assimilation problem
(note that in our computations the observations are generated using the same model as the forward equations; SWE in both cases). In all cases we take a zero initial guess $\phi^{(0)}(x) = 0$ to initialize
the gradient iterations~\eqref{eq:iter}.  Figure~\ref{fig:solutions}
shows the right-going parts of the linear and nonlinear SWE
solutions at the observation time $T=2$.  Both solutions are
effectively zero at the boundary of the truncated computational
  domain ($L=3$) at this time.  Recall that for the linear SWE the
solution is simply a translated copy of the initial condition scaled
by one half.
\begin{figure}
\centering
\includegraphics[width= 0.6\textwidth]{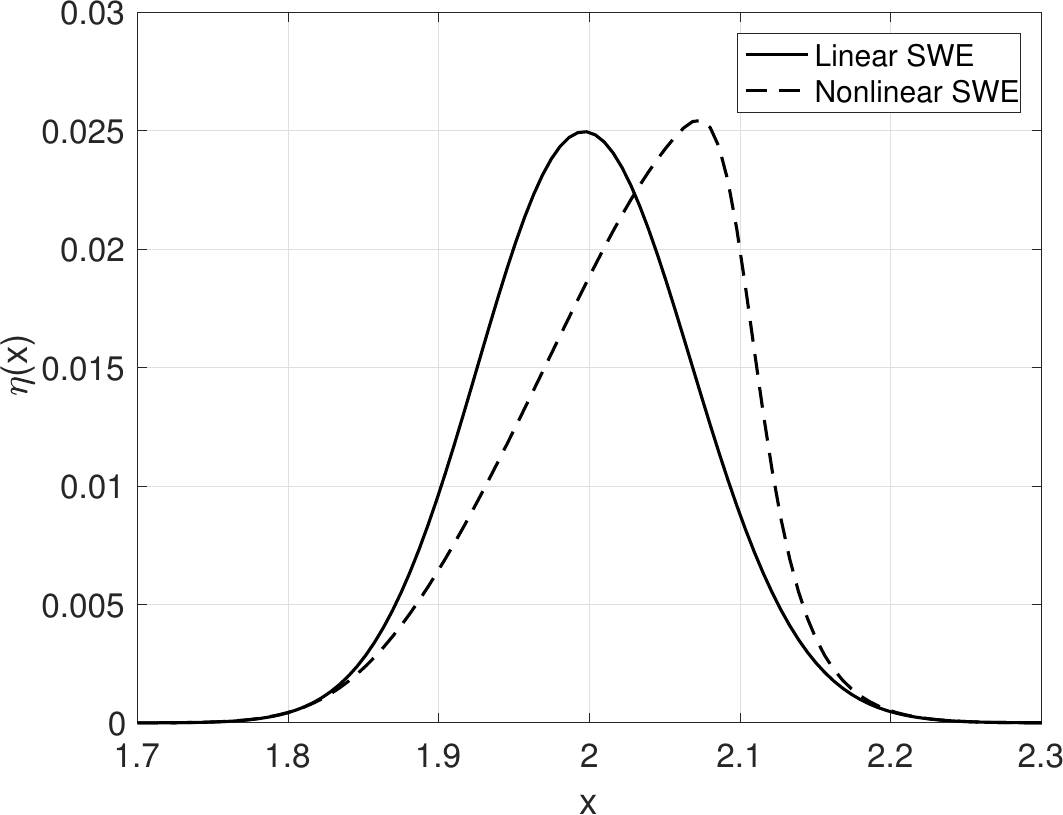}
\caption{Right-going part of the linear and nonlinear SWE solutions at
  the observation time $T=2$. Note the significant difference
  between the two solutions.\label{fig:solutions}}
\end{figure}
The large computational domain and relatively short
observation time ensure that the artificial (periodic)
boundary conditions do not affect the solution (i.e. the domain
remains effectively infinite).  Therefore, choosing all
observation points in the region $x>0$ means that they measure only
the right-going part of the SWE wave solution and do not capture any
information about the left-going part of the solution, as is
typical for the case of tsunami observations.

Next, we present results for the convergence of the cost
functional and of the $L^2$ error in the reconstructed
initial conditions for both the linear and nonlinear SWE cases for
different numbers of observations.  It is important to remember that
we minimize the cost functional~(\eqref{eq:cost}), which is the
$L^2$ norm of the difference between the observations and model
integrated over the observation time, rather than the initial
condition itself.  We stress that, as demonstrated in Section~\ref{sec:fixedpoint}, the assimilation algorithm will not
necessarily converge for all choices of observation points.

Taking Theorem~\ref{thm:conv} and Figure~\ref{fig:ghat} as guides, we
compare a case where the sufficient condition for convergence is
satisfied ($\Delta x = {0.09}$) with a case where it is not
($\Delta x = 0.375$).  In both cases the algorithm is stopped after
1000 iterations. Figure~\ref{fig:Jerr_good} confirms that when the
sufficient condition is satisfied, the algorithm converges for
different numbers of observation points $N_{\text{obs}}>1$.  A small
value of the cost functional indeed corresponds to a small error in
the reconstructed initial condition. The rate of convergence increases
with the number of observation points, consistent with the observation
that $|\widehat{\psi}(k)|$ is further from one as the number of
observation points increases (see Figure~\ref{fig:psihat}(a, c, e)).
Interestingly, taking only two observation points produces much worse
results than three or more observation points.  This suggests that in
practical applications at least three observation points should be
taken.  Note that the linear and nonlinear SWE data assimilation
problems exhibit similar behaviour, although the error for the
nonlinear case is lower than for the linear case for four or more
observation points.  {We have checked that these results are essentially identical for the case of
an isolated bathymetry feature,  $\beta(x) = \exp(-10(x-3/2)^2)/10$ where the total mean depth is $1-\beta(x)$ (results not shown).}
\begin{figure}
\subfigure[]{\includegraphics[width=0.48\textwidth]{\figpath/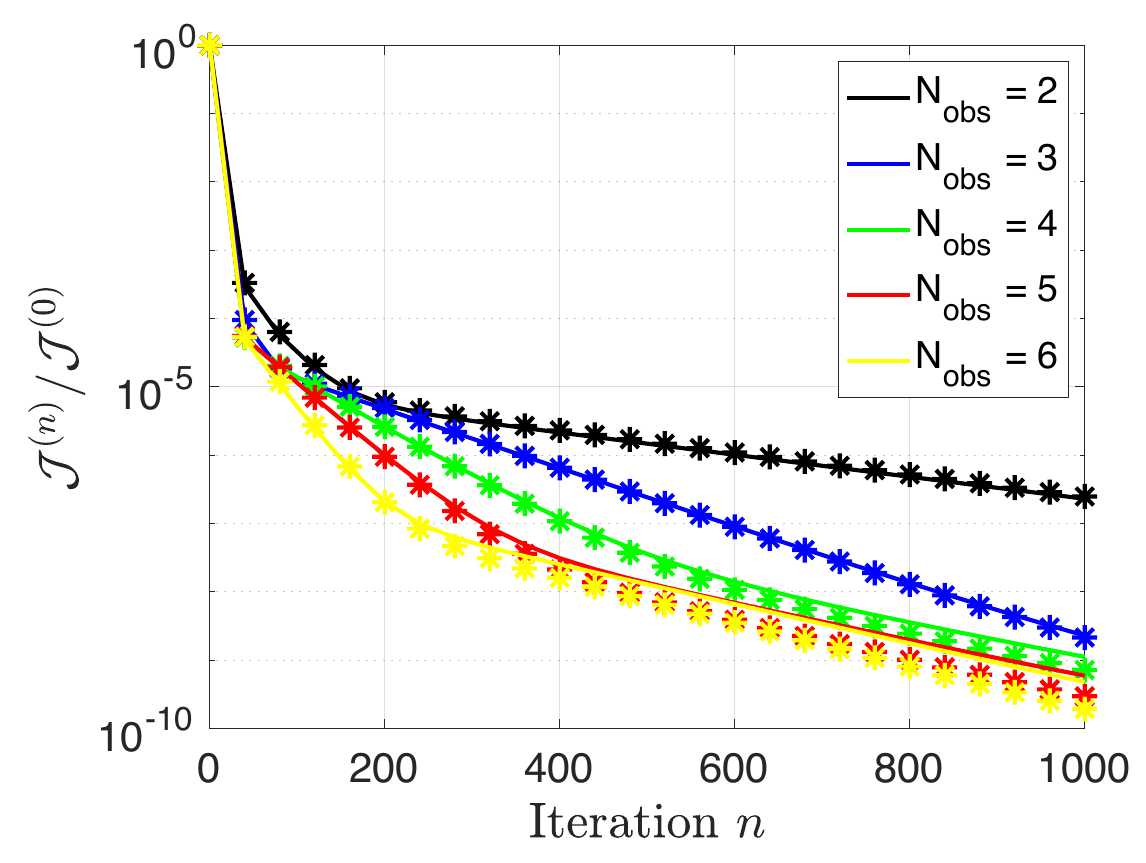}}\quad
\subfigure[]{\includegraphics[width=0.48\textwidth]{\figpath/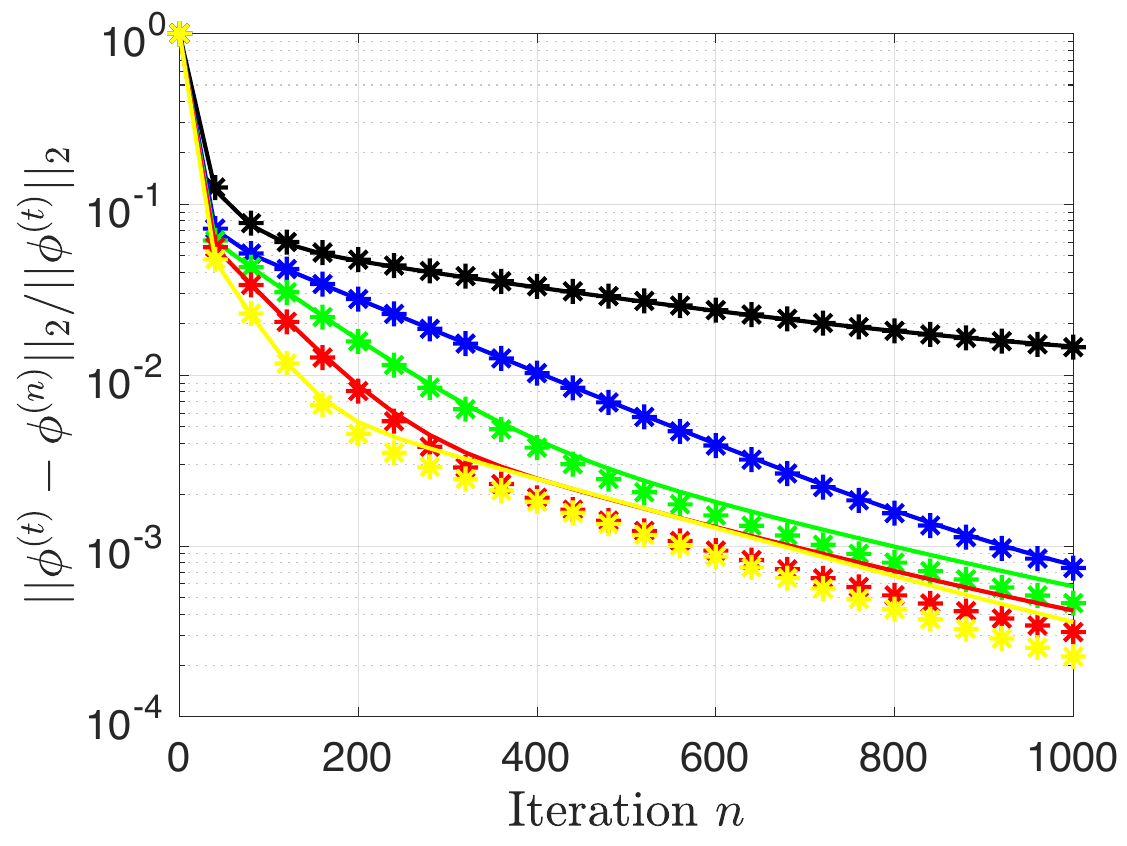}}
\caption{Convergence of (a)~cost function $\J^{(n)}$ and (b) the
  reconstruction error $\|\phi^{(t)} - \phi^{(n)}\|_{L^2(\RR)}$ as functions
  of iteration $n$ for different numbers of observation points for the
  linear {(lines)} and nonlinear {(symbols)}  SWE when the first
  observation point is at $x=0.2$ and the observation points have
  uniform spacing $\Delta x =0.09$. Note that both the cost function
  and reconstruction error converge for all $N_{obs}$, although the
  rate of convergence increases with more observation points. {The results for 
  the linear and nonlinear SWE are  similar.}}
\label{fig:Jerr_good}
\end{figure}

In contrast, when the sufficient condition for convergence
\eqref{eq:fp_condition} is not satisfied, the algorithm fails
to converge to the correct initial condition, or converges very
slowly.  Figure~\ref{fig:Jerr_bad} shows convergence of the cost
functional and error in the reconstructed initial condition
with uniform spacing $\Delta x = 0.375$.  In this case, the error in
the reconstructed initial condition is not reduced significantly for
any number of observation points.  The cost functional is
reduced by five to seven orders of magnitude, although convergence is
extremely slow after the first iteration.  In contrast to the previous
case, the cost functional is much larger for the nonlinear SWE
assimilation, except for the case of two observation points.

\begin{figure}
\subfigure[]{\includegraphics[width=0.48\textwidth]{\figpath/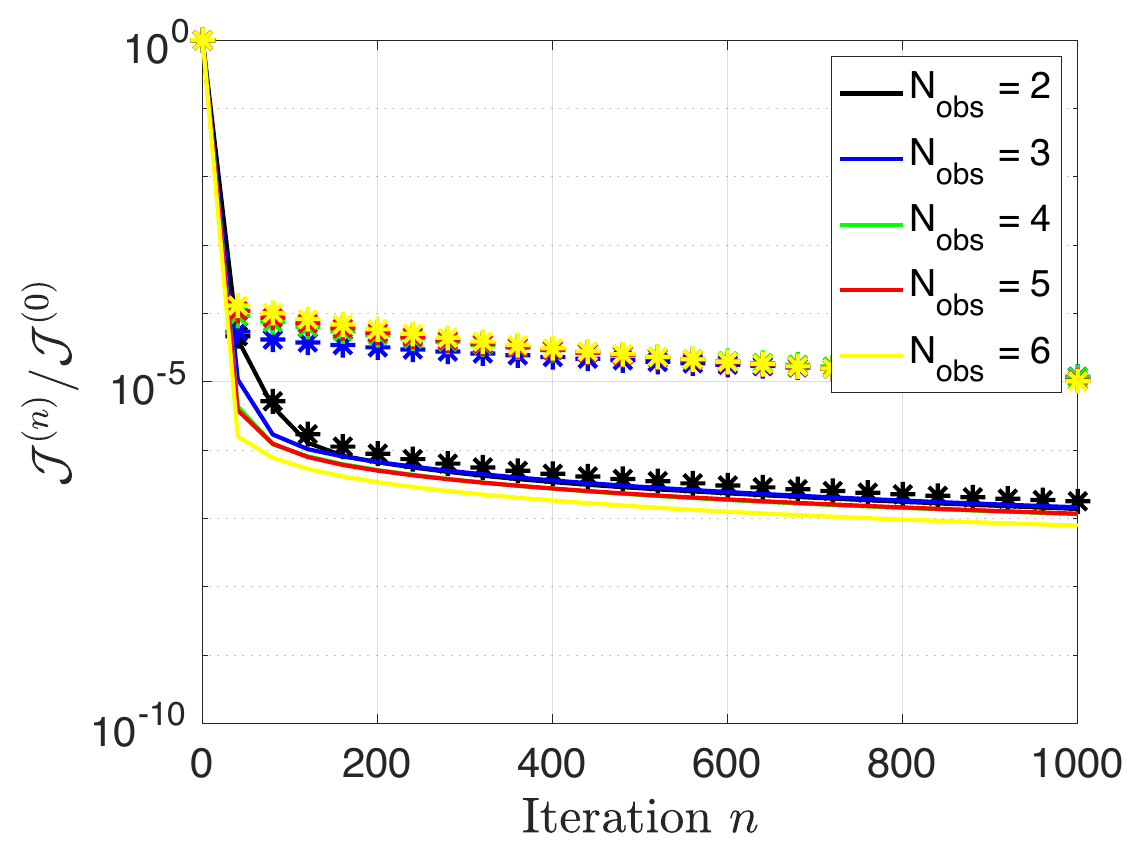}}\quad
\subfigure[]{\includegraphics[width=0.48\textwidth]{\figpath/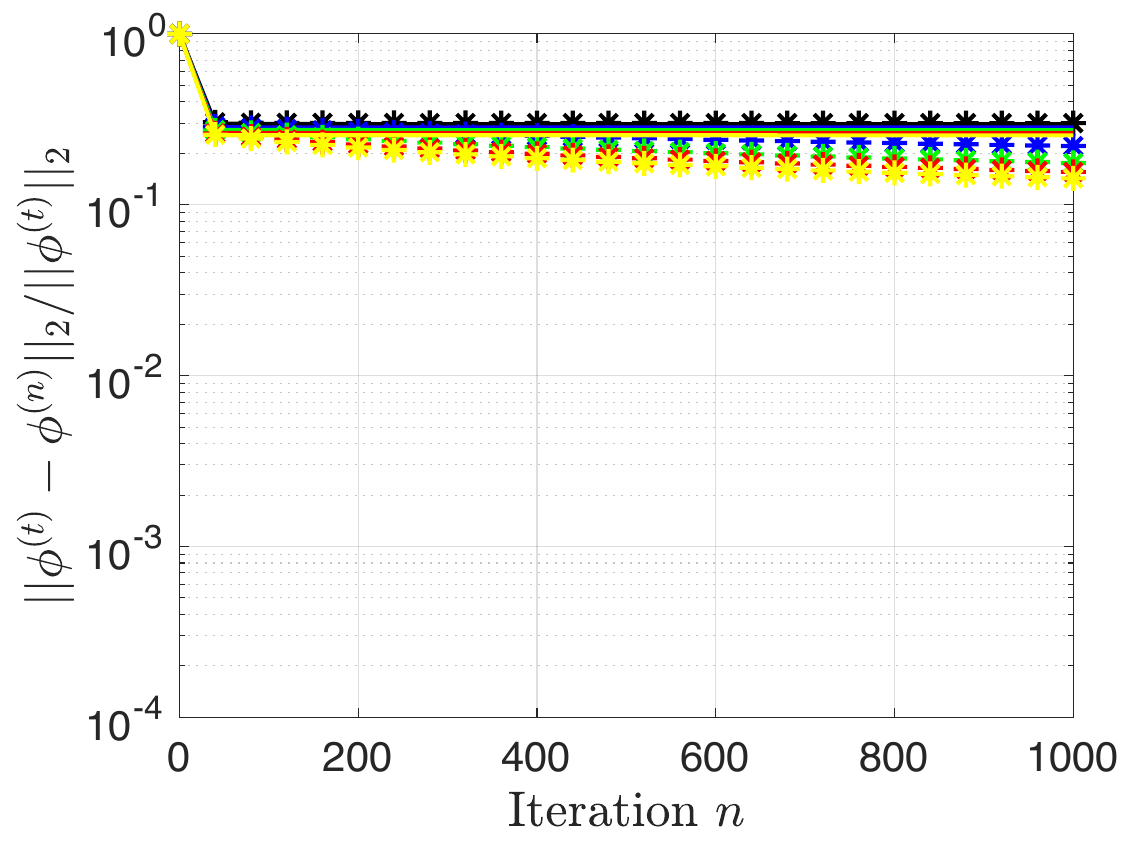}}
\caption{Convergence of (a)~cost function $\J^{(n)}$ and (b)~the
  reconstruction error $\|\phi^{(t)} - \phi^{(n)}\|_{L^2(\RR)}$ as
  functions of iteration $n$ for different numbers of observation
  points for the linear {(lines)} and nonlinear {(symbols)}
  SWE when the first observation point is at $x=0.2$ and the
  observation points have regular spacing $\Delta x =0.375$.  In this
  case $\Delta x$ does not satisfy the sufficient condition for
  convergence \eqref{eq:fp_condition} and the reconstructed initial
  condition fails to converge to the correct initial condition $\phi^{(t)}$.
  {Note that the different data sets overlap since there
    is little dependence on the number of observations points in this
    case where the initial conditions cannot be recovered.}}
\label{fig:Jerr_bad}
\end{figure}

Figure~\ref{fig:optimized} compares the {true initial condition $\phi^{(t)}$} and reconstructed initial conditions after $n=1000$ iterations
{$\phi^{(n)}\approx\phi^{(b)}$} for different spacings of observation
points. In the first case, shown in Figure~\ref{fig:optimized}~(a), the sufficient condition for
convergence to the true initial condition is satisfied
and the initial condition is reconstructed exactly. In the
second case, shown in Figure~\ref{fig:optimized}~(b), the sufficient
condition is not satisfied and the reconstructed initial
condition involves spurious oscillations spread throughout the
computational domain, in addition to an approximation to the true
Gaussian initial conditions (with a reduced magnitude).

Figure~\ref{fig:optimized}(c) shows that in the linear SWE case most
of the error in the reconstructed initial condition is due to
inaccuracies at wavenumbers $k=n \pi/\Delta x$, $n=1,2,3,4$
corresponding to the spacing of the observation points.  Comparing the
Fourier spectra shown in Figures~\ref{fig:optimized} (c) and~(d) indicates that results obtained with the linear and
nonlinear models are much closer for $N_{\text{obs}}=2$
observation points than for four observation points. For
$N_{\text{obs}}=4$ errors are smaller in the nonlinear SWE case, but
occur at the same wavenumbers as in the linear SWE case.  This
observation is confirmed in Figure~\ref{fig:err_vs_n} in
  which lower errors are also evident for the nonlinear case at all
spacings for four observation points.  Finally,
Figures~\ref{fig:optimized}(e) and (f) show similar results for the
larger spacing $\Delta x = 0.48$.  The errors in the Fourier
spectrum also occur at wavenumbers $k=n \pi/\Delta x$, confirming our
analysis.

\begin{figure}
\centering
\vspace*{-1.75cm}
\subfigure[$\Delta x = 0.09$, $N_{\text{obs}}=4$.]
{\includegraphics[width=0.44\textwidth]{\figpath/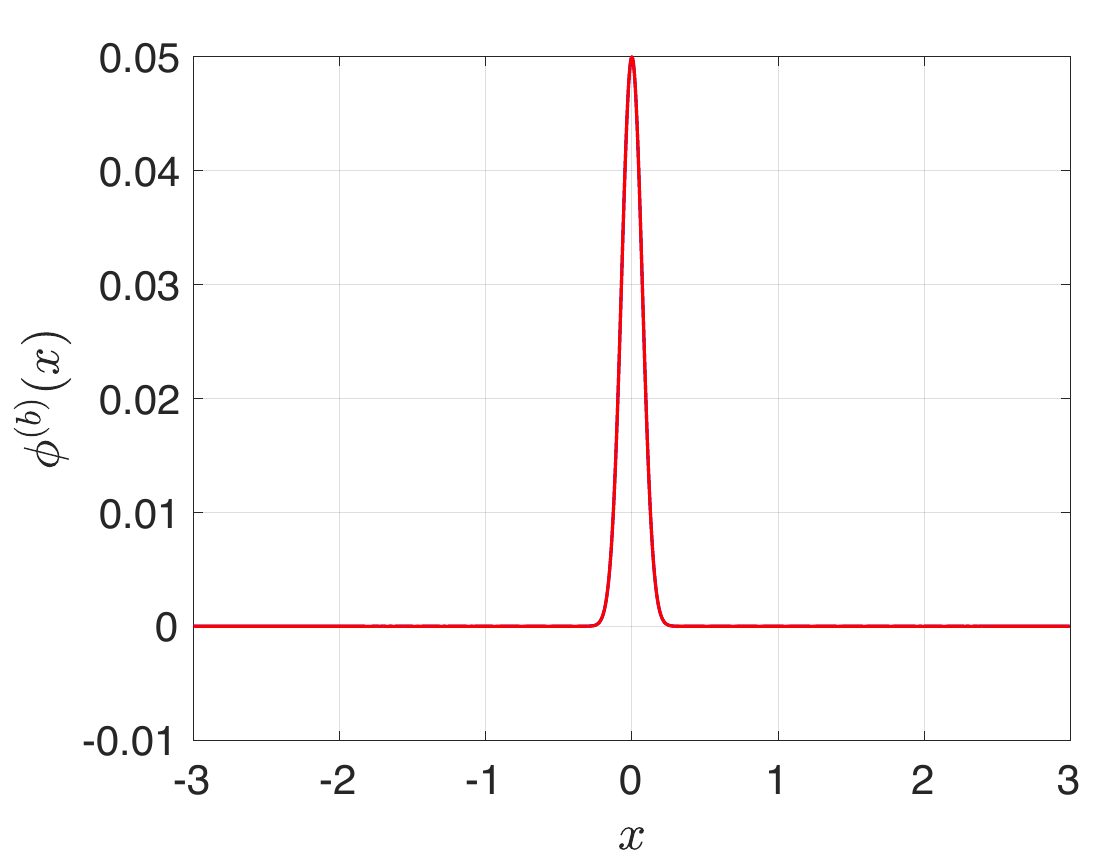}}\quad
\subfigure[$\Delta x = 0.375$, $N_{\text{obs}}=4$.]
{\includegraphics[width=0.44\textwidth]{\figpath/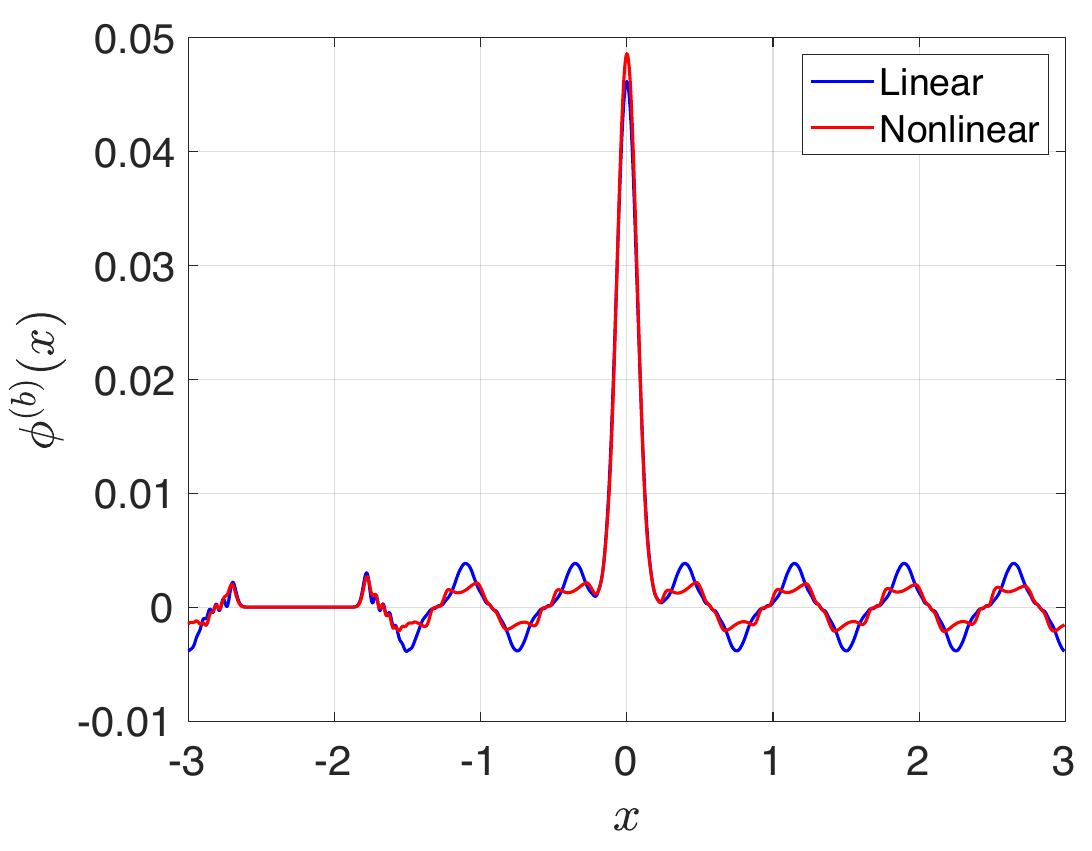}}
\subfigure[$\Delta x = 0.375$, $N_{\text{obs}}=4$.]
{\includegraphics[width=0.44\textwidth]{\figpath/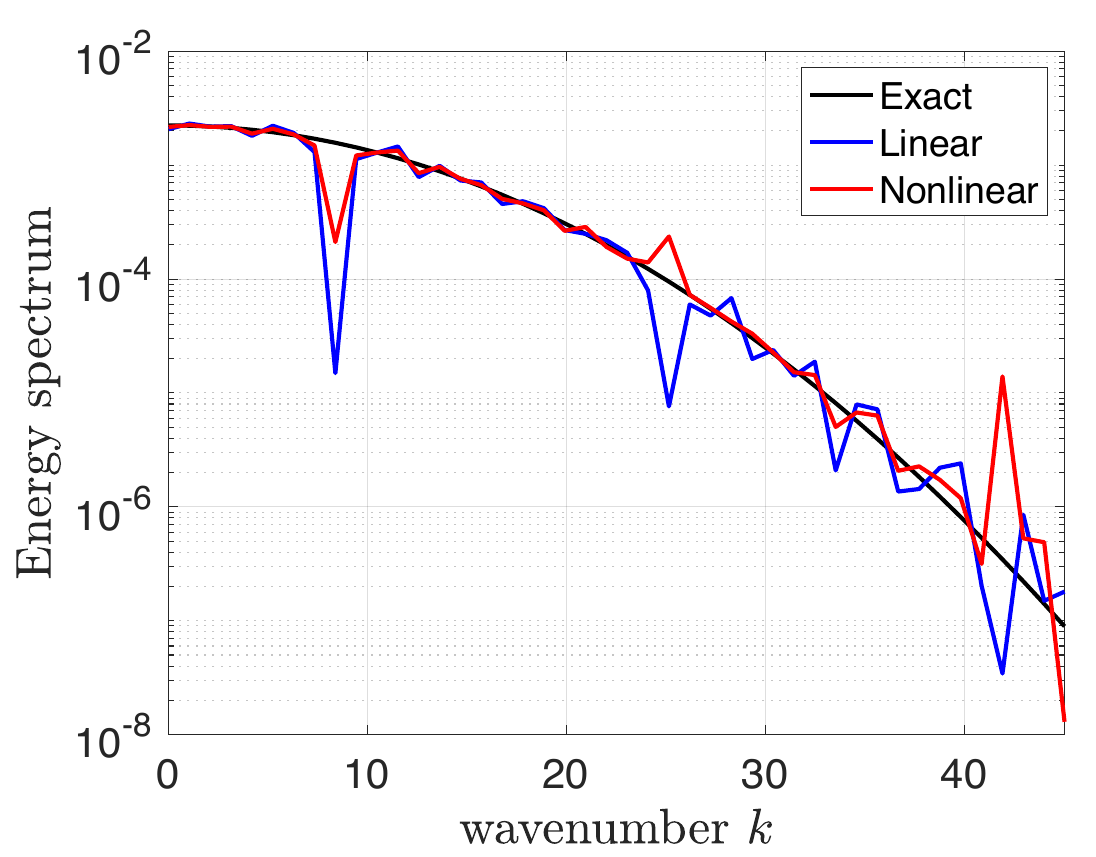}}\quad
\subfigure[$\Delta x = 0.375$, $N_{\text{obs}}=2$.]
{\includegraphics[width=0.44\textwidth]{\figpath/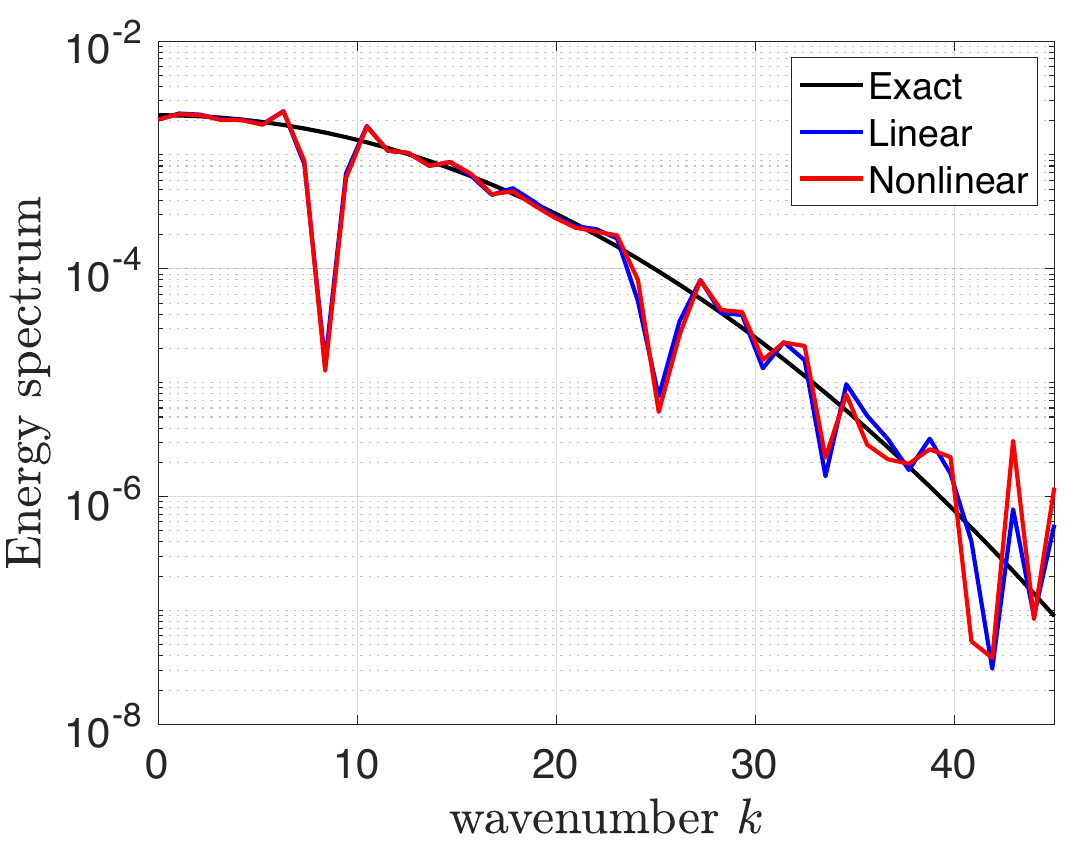}}
\subfigure[$\Delta x = 0.48$, $N_{\text{obs}}=2$.]
{\includegraphics[width=0.44\textwidth]{\figpath/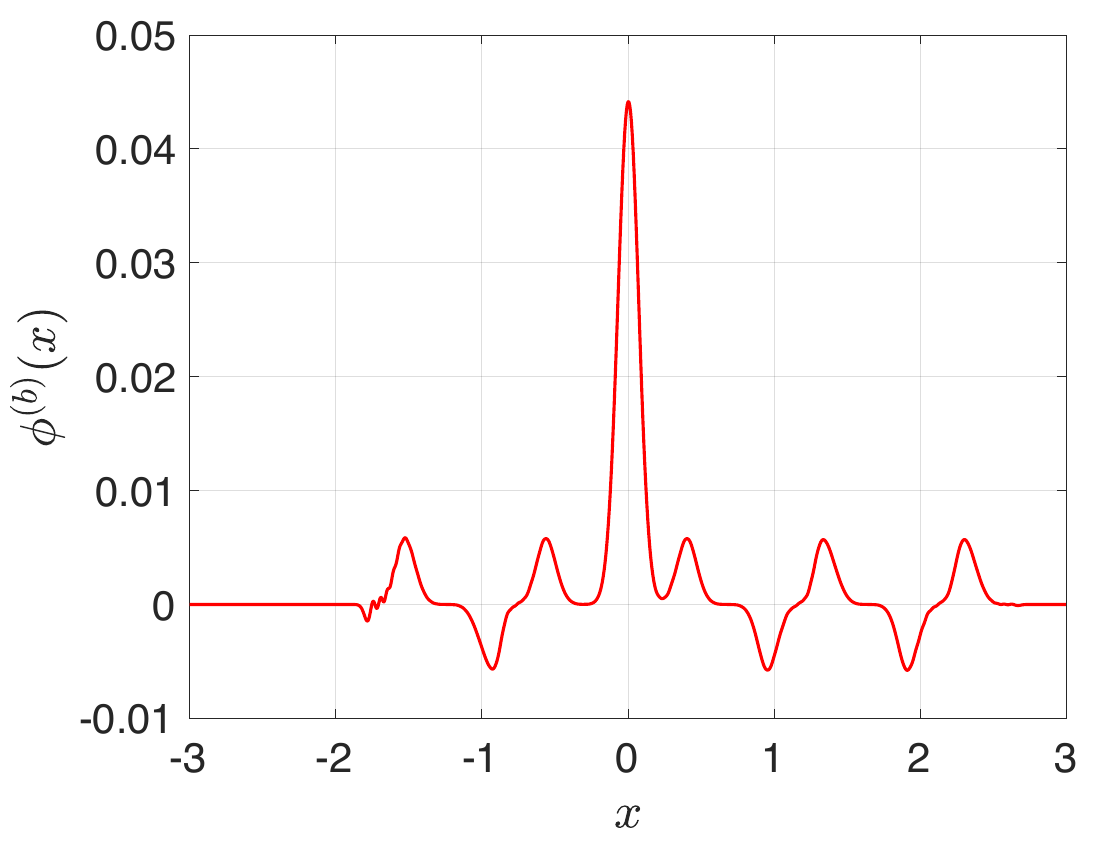}}\quad
\subfigure[$\Delta x = 0.48$, $N_{\text{obs}}=2$.]
{\includegraphics[width=0.44\textwidth]{\figpath/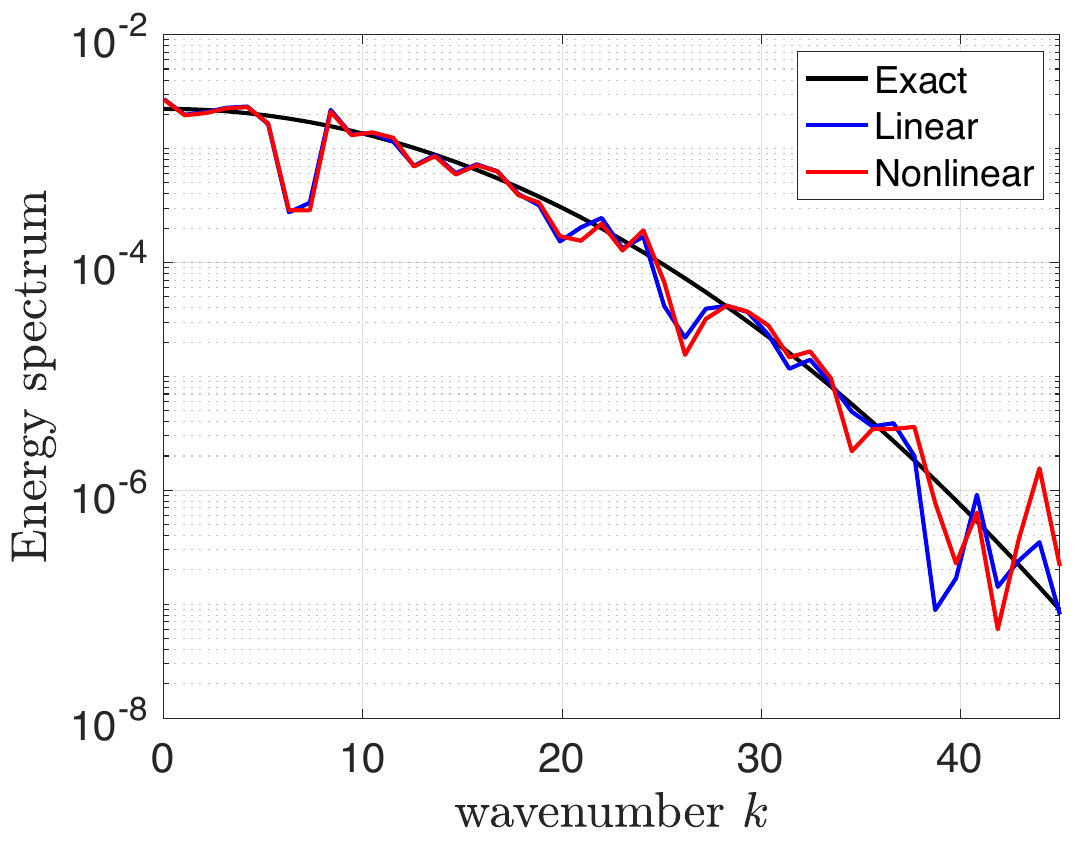}}
\caption{ \label{fig:optimized} {Comparison of} reconstructed
  initial conditions $\phi^{(b)}$ in physical and Fourier space
  {for cases satisfying and not satisfying the sufficient
    condition ~\eqref{eq:fp_condition} for convergence to the true
    initial data $\phi^{(t)}$}.
    (a)~Spacing $\Delta x = 0.09$ satisfying the
  sufficient condition, $N_{\text{obs}}=4$ {The initial
    conditions {$\phi^{(b)}$} reconstructed based on the
    linear and nonlinear models are both}
  indistinguishable from the exact initial condition {$\phi^{(t)}$}.  
  (b)~Spacing $\Delta x=0.375$ not satisfying the condition for convergence,
  $N_{\text{obs}}=4$.  {The reconstructions based on the} {linear and nonlinear equations converge to different {\it incorrect\/} initial conditions.}
  (c)~Fourier spectra of (b) compared with the spectrum {$\widehat{\phi}^{(t)}$} of the true initial condition. 
  (d)~As for (c), but with $N_{\text{obs}}=2$. 
  (e)~Spacing $\Delta x=0.48$ not satisfying the condition for convergence, $N_{\text{obs}}=2$. 
  {The reconstructions based on the} {linear and nonlinear equations both converge to the same {\it incorrect\/} initial conditions.}
  (f)~Fourier spectra of (e) compared with the spectrum
  {$\widehat{\phi}^{(t)}$} of the true initial condition.
  {The errors in the cases when convergence occurs to incorrect
    initial data are highly localized in Fourier space.}}
\end{figure}

Finally, we examine how the error in the reconstructed initial
condition $\phi^{(b)}$ changes as the uniform spacing $\Delta x$
between observation points varies between 0.0234 ($4h$, where $h$ is
the computational grid spacing) and 0.5 for two and four observation
points.  Figure~\ref{fig:transition} confirms that the numerical
results for the linear and nonlinear SWE data assimilations are
qualitatively consistent in three ways with the analysis in
Section~\ref{sec:fixedpoint}, in particular, with the behavior of the
function $|\widehat{\psi}(k)|$ shown in Figure~\ref{fig:psihat}.

First, when $\Delta x > 0.2$, the error generally increases
with increasing $\Delta x$.  Secondly, the error also increases
as $\Delta x\rightarrow 0$.  Thirdly, the error is ``spiky'' for
spacings $\Delta x>0.1$ due to the existence of a discrete set of
wavenumbers where $|\widehat{\psi}(k)| = 1$.  The spikes are not
visible in the case of two observation points since the peaks in
$|\widehat{\psi}(k)|$ are very broad in this case (cf.\
Figures~\ref{fig:psihat} (a) and (c)).  The error is significantly
lower at all spacings for the nonlinear SWE with $N_{\text{obs}}=4$,
while the linear and nonlinear SWE results are very similar for two
observation points.
\begin{figure}
\begin{center}
\includegraphics[width=0.7\textwidth]{\figpath/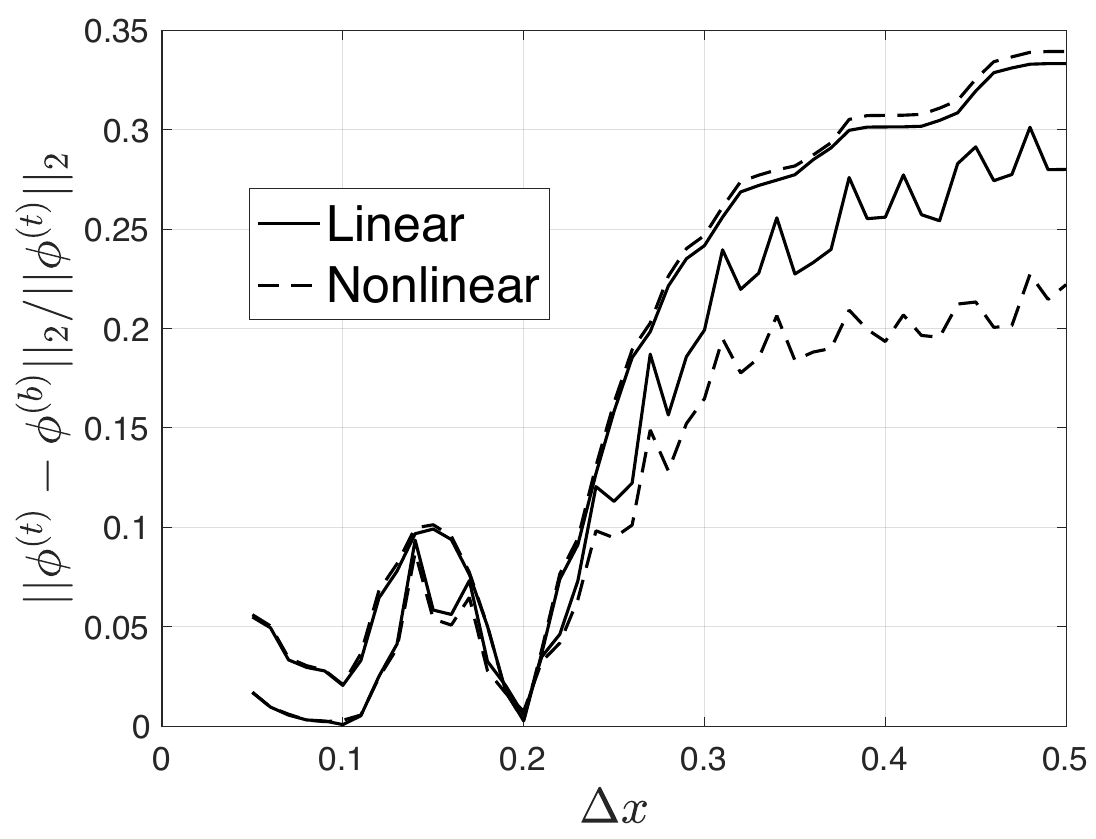}
\end{center}
\caption{Error in the reconstructed initial condition $\phi^{(b)}$ as
  a function of spacing between observation points $\Delta x$ for
  $N_{\text{obs}}=2$ (upper curves) and $N_{\text{obs}}=4$ (lower
  curves) after 1000 iterations. Note that the oscillations in the
  reconstruction error are similar to the oscillations in {the
    function} $|\widehat{\psi}(k)|$ {determining the convergence rate
    of the assimilation iterations} shown in
  Figure~\ref{fig:psihat}\label{fig:transition}.}
\label{fig:err_vs_n}
\end{figure}

\section{Conclusions\label{sec:concl}}
This paper considers the problem of estimating the initial conditions
for the one-dimensional SWE from wave height observations.  We
consider initial conditions with compact support and observations only
to the right side of this support (so only the right-going wave
is observed). This models the case of reconstructing the initial
conditions for a tsunami wave where observations typically do not
capture the total energy of the wave (which is the strict condition
for observability with one observation point~\cite{Zuazua:2005}).  It
is not clear {\it a priori\/} that the wave is observable under these
conditions, even with multiple observation points.

This data assimilation problem is framed as PDE-constrained
optimization in which a least-squares error functional is minimized
with respect to the initial condition. Optimal reconstructions are
computed with a gradient-based approach in which the cost functional
gradients are evaluated based on solutions of adjoint equations.
While in operational practice a ``discretize-then-differentiate''
approach is often adopted, in the present study we followed the
``optimize-then-discretize'' paradigm as it allowed us to derive a
sufficient condition for the convergence of iterations to the true
initial condition in the linear setting. Derivation and analysis of
this condition, which is easy to verify, are the key contributions of
this study.

The linear assimilation equations can be solved analytically, giving
the exact expression for the gradient of the cost functional.
Theorem~\ref{thm:conv} provides a sufficient condition for the
existence of a unique fixed point corresponding to convergence to the
true initial conditions.  It turns out that the algorithm cannot
converge at all for the case of a single observation point: at
least two observation points must be used.  However, the algorithm may
also fail to converge to the true initial conditions even with
multiple observation points.  A sufficient condition for the existence
of a unique fixed point is that at least one pair of observation
points is closer than $\lambda_{\text{min}}/2$, where $\lambda_{\text{min}}$ is the
minimum length scale of the true initial conditions. Note that
$\lambda_{\text{min}}/2$ is the Nyquist frequency associated to the initial
conditions. {These results also apply to the linear wave equation.}

{Furthermore, the analytical results suggest that in the discrete
  case the rate of convergence increases with increasing numbers of
  observation points.  This is confirmed by our numerical results.
  Interestingly, when the real line is discretized (rather than
  represented continuously) there are many observation point spacings
  larger than the critical spacing $\Delta x = \pi / k_{\text{max}}$
  that may still converge to the correct initial data since
    the ``bad'' spacings are represented only by a discrete set of
    values (this set, however, becomes increasingly dense as the
    resolution is refined, i.e., as $\Delta k \rightarrow 0$). It is
  therefore possible, especially with low-resolution
    discretizations, that with a ``lucky'' choice of $\Delta x > \pi /
    k_{\text{max}}$ one may still recover the initial conditions with
  widely spaced observation points.}

{We have assumed that the observations are continuous in time.
  This is a realistic assumption for tsunami monitoring, and is
  natural for the continuous infinite-dimensional PDE-based
    data assimilation problem that we consider.  In the case of
  discrete observations in time and wave speed $c$ it should be
  sufficient to sample the sea surface height at a frequency $f >
  c/(\lambda_{\text{min}})$ in order to capture the smallest scales of
  the wave.}

Finally, we verify the performance of the data assimilation algorithm
and the mathematical analysis for the case of a Gaussian initial
condition with zero initial guess.  The algorithm converges to the
true initial condition for both the linear and nonlinear SWE provided
there are at least two observation points and the points are
sufficiently close together (as determined by the theorem).  The rate
of convergence increases with the number of observation points, and at
least three points are needed for practically useful results.
Relative $L^2$ errors in the reconstructed initial conditions
of $O(10^{-2})$ are achieved in about 100 iterations and errors of
$O(10^{-4})$ are achieved in 500--1000 iterations for six observation
points.  {These results are essentially identical for the case of
an isolated bathymetry feature,  $\beta(x) = \exp(-10(x-3/2)^2)/10$ where the total mean depth is $1-\beta(x)$.}

We also confirmed that if the observation points are too widely spaced
(so the sufficient condition is not satisfied) the algorithm converges
to the wrong initial condition, for both linear and nonlinear SWE.
The error is due to underestimating the energy at discrete wavenumbers
$k=n\pi/\Delta x$, $n=1,2,3,\ldots$, where $\Delta x$ is the spacing
between the observation points. Interestingly, this failure to
converge to the true initial condition cannot be deduced from the
behaviour of the cost functional which decreases to small
values.

In addition to the simple gradient descent algorithm, we also solved
the minimization problem \eqref{eq:minJ} using two commonly used
nonlinear conjugate gradient methods: Fletcher--Reeves and
Polak-Ribi\'ere.  The results were consistent with the gradient
descent method although, as expected, convergence was usually faster.
Importantly, the conjugate gradient method did not modify the
conditions for convergence to the true initial conditions, i.e.
\eqref{eq:fp_condition} (see Figure~\ref{fig:Jerr_bad}).

Although this one-dimensional case is not physically realistic, it
allows for careful investigation of the potential and limitations of
variational data assimilation for the tsunami problem where only a
relatively small number of sparsely distributed observations are
possible. More specifically, it makes it possible to identify
sufficient conditions for convergence to the true initial condition
that should provide guidance for the two-dimensional problem.  In
particular, the Nyquist-like condition on the spacing of observation
points should carry over to the 2D case.  Importantly, the numerical
results have confirmed that the nonlinear problem behaves very
similarly to the linear problem we solved exactly.

As a next step, we plan to investigate to what extent the results reported 
here generalize to the SWE on 2D planar domains, or to a global ocean model 
\cite{Kevlahan/Dubos/Aechtner:2015}.  Mathematical analysis in such
settings is more complicated due to the geometry of the problem, although probably possible for the
linearized equations. A related question concerns using data 
 assimilation to infer bathymetry information from sea surface
height observations.  Cobelli et al.~\cite{Cobelli/etal:2017} recently investigated the related problem of determining the shape of a localized movement of the seafloor from surface observations. Bath\-y\-me\-try, together with the initial conditions, are the two major factors affecting the accuracy of
tsunami forecasting and available bathymetry data is often
incomplete or of insufficient resolution.

Finally, we also intend to investigate optimal placement of
observation points in the two-dimensional case including bathymetry
effects. As an important enabler for these efforts, our
dynamically adaptive wavelet method for the two-dimensional
SWE {on the sphere}~\cite{Kevlahan/Dubos/Aechtner:2015} opens up the possibility of
adapting the computational grid to improve the accuracy of the
assimilation.

\section*{Acknowledgements}
We are grateful to our colleague S.~Alama for his suggestion to use
the Fourier transform to analyze the fixed point of the gradient
descent algorithm.
%\end{acknowledgements}

%\bibliographystyle{spmpsci} 
%\bibliography{bib_nk.bib,allPROTAS.bib}

\end{document}